\newcommand{\ignore}[1]{}
\newtheorem{theorem}{Theorem}
\newtheorem{lemma}[theorem]{Lemma}
\newtheorem{definition}[theorem]{Definition}
\newtheorem{remark}{Remark}
\newtheorem{claim}[theorem]{Claim}
\newcommand{\claimproof}[2]%
{\noindent{\em Proof of Claim \ref{#1}.}
#2\hspace*{\fill}$\Box$~~~~\vspace{3.5mm} }
\newcommand{\F}{\mathbb{F}}
\newcommand{\N}{\mathbb{N}}
\newcommand{\Z}{\mathbb{Z}}
\newcommand{\val}{\text{val}}
\newcommand{\gen}[1]{\langle #1 \rangle}
\newenvironment{breakablealgorithm}
  {
   \begin{center}
     \refstepcounter{algorithm}
     \hrule height.8pt depth0pt \kern2pt
     \renewcommand{\caption}[2][\relax]{
       {\raggedright\textbf{\ALG@name~\thealgorithm} ##2\par}%
       \ifx\relax##1\relax 
         \addcontentsline{loa}{algorithm}{\protect\numberline{\thealgorithm}##2}%
       \else 
         \addcontentsline{loa}{algorithm}{\protect\numberline{\thealgorithm}##1}%
       \fi
       \kern2pt\hrule\kern2pt
     }
  }{
     \kern2pt\hrule\relax
   \end{center}
  }
\newcommand{\algmargin}{\the\ALG@thistlm}
\newlength{\whilewidth}
\algnewcommand{\parState}[1]{\State%
  \parbox[t]{\dimexpr\linewidth-\algmargin}{\strut #1\strut}}
\begin{document}

\pagenumbering{gobble}


\title{\bf Efficiently factoring polynomials modulo $p^4$}

\author{
Ashish Dwivedi \thanks{CSE, Indian Institute of Technology, Kanpur, \texttt{ashish@cse.iitk.ac.in} }
\and
Rajat Mittal \thanks{CSE, Indian Institute of Technology, Kanpur, \texttt{rmittal@cse.iitk.ac.in} }
\and
Nitin Saxena \thanks{CSE, Indian Institute of Technology, Kanpur, \texttt{nitin@cse.iitk.ac.in} }
}

\date{}
\maketitle

\begin{abstract}
Polynomial factoring has famous practical algorithms over fields-- finite, rational \& $p$-adic. However, modulo prime powers it gets hard as there is non-unique factorization and a combinatorial blowup ensues. For example, $x^2+p \bmod p^2$ is irreducible, but $x^2+px \bmod p^2$ has exponentially many factors! We present the first randomized poly($\deg f, \log p$) time algorithm to factor a given univariate integral $f(x)$ modulo $p^k$, for a prime $p$ and $k \leq 4$. Thus, we solve the open question of factoring modulo $p^3$ posed in (Sircana, ISSAC'17). 

Our method reduces the general problem of factoring $f(x) \bmod p^k$ to that of {\em root finding} in a related polynomial $E(y) \bmod\langle p^k, \varphi(x)^\ell \rangle$ for some irreducible $\varphi \bmod p$. We could efficiently solve the latter for $k\le4$, by incrementally transforming $E(y)$. Moreover, we discover an efficient and strong generalization of Hensel lifting to lift factors of $f(x) \bmod p$ to those  $\bmod\ p^4$ (if possible). This was previously unknown, as the case of repeated factors of $f(x) \bmod p$ forbids classical Hensel lifting.
\end{abstract}

\vspace{-.30mm}
\noindent
{\bf 2012 ACM CCS concept:} Theory of computation-- Algebraic complexity theory,  Problems, reductions and completeness; Computing methodologies-- Algebraic algorithms, Hybrid symbolic-numeric methods; Mathematics of computing-- Number-theoretic computations.

\vspace{-.45mm}
\noindent
{\bf Keywords:} efficient, randomized, factor, local ring, prime-power, Hensel lift, roots, p-adic.  


\pagenumbering{arabic}

\vspace{-1mm}

\section{Introduction}
\label{sec-intro}
\vspace{-1mm}

Polynomial factorization is a fundamental question in mathematics and computing. In the last decades, quite efficient algorithms have been invented for various fields, e.g., over rationals \cite{lenstra1982factoring}, number fields \cite{landau1985factoring}, finite fields \cite{berlekamp1967factoring, cantor1981new, kedlaya2011fast}, $p$-adic fields \cite{chistov1987efficient, cantor2000factoring}, etc. 
Being a problem of huge theoretical and practical importance, it has been very well studied; for more background refer to surveys, e.g., \cite{kaltofen1992polynomial, von2001factoring, forbes2015complexity}.

The same question over {\em composite} characteristic rings is believed to be computationally hard, e.g.~it is related to integer factoring \cite{shamir1993generation, klivans1997factoring}.
What is less understood is factorization over a local {\em ring}; especially, ones that are the residue class rings of $\Z$ or $\F_q[z]$. A natural variant is as follows.

\smallskip\noindent
\textbf{Problem:} Given a univariate integral polynomial $f(x)$ and a prime power $p^k$, with $p$ prime and $k\in \N$; output a nontrivial factor of $f\bmod p^k$ in randomized poly($\deg f, k\log p$) time.

\smallskip
Note that the polynomial ring $(\mathbb{Z}/\langle p^k\rangle)[x]$ is {\em not} a unique factorization domain. So $f(x)$ may have many, usually {\em exponentially} many, factorizations. 
For example, $x^2+px$ has an irreducible factor $x+\alpha p \bmod p^2$ for each $\alpha \in [p]$ and so $x^2+px$ has exponentially many (wrt $\log p$) irreducible factors modulo $p^2$. 
This leads to a total breakdown in the classical factoring methods. 

\smallskip
{\em We give the first randomized polynomial time algorithm to non-trivially factor (or test for irreducibility) a polynomial $f(x) \bmod p^k$, for $k\leq 4$. 

Additionally, when $f\bmod p$ is power of an irreducible, we provide (\& count) all the lifts $\bmod\ p^k$ ($k\leq 4$) of any factor of $f\bmod p$, in randomized polynomial time.}
\smallskip

Usually, one factors $f(x) \bmod p$ and tries to ``lift'' this factorization to higher powers of $p$. If the former is a coprime factorization then Hensel lifting \cite{Hensel1918} helps us in finding a non-trivial factorization of $f(x) \bmod p^k$ for any $k$. But, when $f(x) \bmod p$ is power of an irreducible then it is not known how to lift to some factorization of $f(x) \bmod p^k$. To illustrate the difficulty let us see some examples (also see \cite{von1996factorization}).

\smallskip\noindent
{\bf Example.} [coprime factor case]
Let $f(x)=x^2+10 x+21$. Then $f\equiv x(x+1) \bmod 3$ and Hensel lemma lifts this factorization uniquely mod $3^2$ as $f(x)\equiv (x+1\cdot3)(x+1+2\cdot3)\equiv (x+3)(x+7) \bmod 9$. This lifting extends to any power of $3$.

\smallskip\noindent
{\bf Example.} [power of an irreducible case] Let $f(x)=x^3+12x^2+3x+36$ and we want to factor it mod $3^3$. Clearly, $f\equiv x^3 \bmod 3$. By brute force one checks that, the factorization $f\equiv x\cdot x^2 \bmod 3$ lifts to factorizations mod $3^2$ as: $x(x^2+3x+3)$, $(x+6)(x^2+6x+3)$, $(x+3)(x^2+3)$. Only the last one lifts to mod $3^3$ as: $(x+3)(x^2+9x+3),(x+12)(x^2+3), (x+21)(x^2+18x+3)$. 

So the big issue is: efficiently determine which factorization out of the exponentially many factorizations mod $p^j$ will lift to mod $p^{j+1}$?

\vspace{-2mm}
\subsection{Previously known results}
\label{sec-known-results}
\vspace{-1mm} 

Using Hensel lemma it is easy to find a non-trivial factor of $f \bmod p^k$ when $f\bmod p$ has two coprime factors. So the hard case is when $f \bmod p$ is power of an irreducible polynomial. The first resolution in this case was achieved by \cite{von1998factoring} assuming that $k$ is ``large''. They assumed $k$ to be larger than the maximum power of $p$ dividing the discriminant of the integral $f$. Under this assumption (i.e.~$k$ is large), they showed that factorization modulo $p^k$ is well behaved and it corresponds to the unique $p$-adic factorization of $f$ (refer $p$-adic factoring \cite{chistov1987efficient, chistov1994algorithm, cantor2000factoring}). To show this, they used an extended version of Hensel lifting (also discussed in \cite{borevich1986number}). Using this observation they could also describe {\em all} the factorizations modulo $p^k$, in a compact data structure.  
The complexity of \cite{von1998factoring} was improved by \cite{Cheng2001}.

The related questions of root finding and root counting of $f\bmod p^k$ are also of classical interest, see \cite{niven2013introduction, apostol2013introduction}. A recent result by \cite[Cor.24]{berthomieu2013polynomial} resolves these problems in randomized polynomial time. Again, it describes {\em all} the roots modulo $p^k$, in a compact data structure.  

Root counting has interesting applications in arithmetic algebraic-geometry, eg.~to compute Igusa zeta function of a univariate integral polynomial
\cite{zuniga2003computing, denef2001newton}. Partial derandomization of root counting algorithm has been obtained by \cite{cheng2017counting, kopp2018randomized} last year; however, a deterministic poly-time algorithm is still unknown.

Going back to factoring $f \bmod p^k$, \cite{von1996factorization} discusses the hurdles when $k$ is small. The factors could be completely unrelated to the corresponding $p$-adic factorization, since an irreducible $p$-adic polynomial could reduce mod $p^k$ when $k$ is small. We give an example from  \cite{von1996factorization}. 

\smallskip\noindent
{\bf Example.}
Polynomial $f(x) = x^2+3^k$ is irreducible over $\Z/\langle 3^{k+1}\rangle$ and so over $3$-adic field. But, it is reducible mod $3^{k}$ as $f\equiv x^2 \bmod 3^k$.

\smallskip
They also discussed that the distinct factorizations are completely different and not nicely related, unlike the case when $k$ is large. An example taken from \cite{von1996factorization} is,

\smallskip\noindent
{\bf Example.}
$f= (x^2+243)(x^2+6)$ is an irreducible factorization over $\Z/\langle 3^6\rangle$. There is another completely unrelated factorization $f=(x+351)(x+135)(x^2+243x+249) \bmod 3^6$.

\smallskip
Many researchers tried to solve special cases, especially when $k$ is constant. The only successful factoring algorithm is by \cite{salagean} over $\Z/\langle p^2\rangle$; it is actually related to Eisenstein criterion for irreducible polynomials. The next case, to factor modulo $p^3$, is unsolved and was recently highlighted in  \cite{sircana2017factorization}.

\vspace{-2mm}
\subsection{Our results }\label{sec-results}
\vspace{-1mm}

We saw that even after the attempts of last two decades we do not have an efficient algorithm for factoring mod $p^3$. Naturally, we would like to first understand the difficulty of the problem when $k$ is constant. In this direction we make  significant progress by devising a unified method which solves the problem when $k=2,3$ or $4$ (and sketch the obstructions we face when $k\ge5$). Our first result is,

\vspace{-1mm}
\begin{theorem}
\label{thm1}
Let $p$ be prime, $k\leq 4$ and $f(x)$ be a univariate integral polynomial. Then, $f(x) \bmod p^k$ can be factored (\& tested for irreducibility) in randomized poly($\deg f, \log p$) time. 
\end{theorem}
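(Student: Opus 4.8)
The plan is to reduce factoring $f(x) \bmod p^k$ to a root-finding problem over a more structured ring, and then to solve that root-finding problem for $k \le 4$ by an incremental lifting procedure. First I would reduce to the case where $f \bmod p$ is a power of a single irreducible $\varphi(x) \bmod p$: using a coprime factorization of $f \bmod p$ (computable by standard finite-field factoring), classical Hensel lifting splits $f \bmod p^k$ along coprime factors, so the only obstruction is a single "unramified block'' $f \equiv \varphi(x)^b \bmod p$. After a $\mathbb{F}_p$-change of variables we may take $\varphi = x$, i.e.\ $f \equiv x^b \bmod p$; equivalently $f$ lies in the local ring $\Z_p[x]$ with residue field $\F_p$ and $x$ a uniformizer-like element. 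The goal becomes: find, or certify the nonexistence of, a nontrivial monic factor of $f$ modulo $\langle p^k, \varphi^\ell\rangle$.

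The key idea is to pass to a univariate equation for the unknown factor. A degree-$d$ monic factor $g$ of $f \bmod p^k$ (with $g \equiv \varphi^d \bmod p$) can be written with its nontrivial coefficients parametrized by a single auxiliary variable $y$ lying in the local ring $R := (\Z/p^k)[x]/\langle \varphi(x)^\ell\rangle$ for suitable $\ell = O(b)$. Substituting the ansatz into $f = g \cdot h$ and eliminating $h$ (e.g.\ via resultant / the structure of $R$ as a chain ring) produces a single polynomial $E(y)$ with coefficients in $R$ whose roots in $R$ correspond exactly to the sought factors $g$. Thus factoring $f \bmod p^k$ reduces to \emph{root finding} of $E(y)$ over the chain ring $R = (\Z/p^k)[x]/\langle\varphi^\ell\rangle$ — this is the reduction advertised in the abstract.

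Next I would solve root finding for $E(y) \bmod \langle p^k, \varphi^\ell\rangle$ when $k \le 4$ by repeatedly "transforming'' $E$. The mechanism is a strengthened Hensel step: even when $E' $ vanishes modulo the maximal ideal (the obstruction that kills classical Hensel), one can still make progress by a Newton-polygon-type case split. Write a candidate root as $y = y_0 + \varphi^{i} z$ (or $y_0 + p^{j}z$); substitute, divide out the forced power of the maximal ideal, and obtain a new equation $\tilde E(z)$ over a ring with a strictly smaller "defect''. Because $k \le 4$, the relevant defect parameter (roughly, $v_p(\mathrm{disc})$ restricted to the block, or the number of lift-levels) is bounded by a small constant, so only a bounded-depth, bounded-branching tree of transformations arises, each node solvable by: (i) a direct root computation over $\F_p$ when $E'\neq 0$ mod the maximal ideal, giving a unique lift; or (ii) when $E' \equiv 0$, a finite enumeration over the residue field of the possible leading term of $z$, each branch recursing with reduced defect. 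The leaves are base cases over $\F_p$ solvable by Berlekamp/Cantor–Zassenhaus. Collecting the surviving branches yields either a nontrivial factor of $f \bmod p^k$ or a proof that none exists (irreducibility); the same bookkeeping counts and compactly represents all lifts when $f \bmod p$ is a prime power, giving the second clause.

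The main obstacle I expect is controlling the branching and the ring changes in step three: a naive case analysis on the Newton polygon of $E$ could in principle blow up exponentially, and for general $k$ it does (hence the $k \le 4$ restriction). The delicate work is to show that for $k \le 4$ the defect decreases by at least one in every branch, that the number of relevant branches at each node is $\mathrm{poly}(\deg f)$ rather than $p$, and that the intermediate rings stay of the form "chain ring of length $O(k)$ over $\F_p$'' so that root finding at the leaves stays polynomial-time. I would also need to verify that the degree parameters $d$ and $\ell$ stay $\mathrm{poly}(\deg f)$ throughout the transformations, and that the whole procedure is insensitive to the choice of $p$ (only $\log p$-dependence), which forces all field-level computations to be over $\F_p$ and all enumerations to be over bounded-size sets.
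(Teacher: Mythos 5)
Your first two steps track the paper: reducing via Hensel to the case $f\equiv\varphi^e\bmod p$ and recasting a factor $\varphi^a-py$ as a root of a single polynomial $E(y)$ over the bi-generated ideal $\langle p^k,\varphi^{\ell}\rangle$ is exactly the paper's Reduction theorem (Theorem~\ref{thm-reduction}), although you leave the construction of $E$ vague (``resultant''), whereas the whole method depends on the explicit truncated geometric series $E(y)=f\cdot(\varphi^{a(k-1)}+\varphi^{a(k-2)}(py)+\cdots+(py)^{k-1})$, whose low degree in $y$ (namely $k-1$) is what makes everything downstream feasible. Also, the normalization ``after an $\F_p$-change of variables we may take $\varphi=x$'' is not available when $\deg\varphi>1$; the paper simply works with a general irreducible $\varphi$.

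The genuine gap is in your root-finding engine, which is the heart of the theorem. The ring $\Z[x]/\langle p^k,\varphi^{ak}\rangle$ is not a chain ring: its maximal ideal $\langle p,\varphi\rangle$ is not principal, so there is no single uniformizer, no Newton-polygon/defect parameter that generic ramified-Hensel recursions can consume, and your step (ii) --- ``finite enumeration over the residue field of the possible leading term of $z$'' --- is an enumeration over $\F_{p^{\deg\varphi}}$, i.e.\ exponential in $\log p$ and $\deg\varphi$. You flag precisely this (branching must be $\mathrm{poly}(\deg f)$, not $p$) as ``the delicate work,'' but give no mechanism for it, and the claim that for $k\le4$ ``the defect decreases by at least one in every branch'' is asserted, not argued. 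The paper's resolution is quite specific and does not follow from a generic bounded-depth recursion: (a) the representative-root data structure of \textsc{Root-Find} (Theorem~\ref{thm-random-root-find}) caps the number of branches of any univariate solve by its $y$-degree, independently of $p$; (b) for $k=4$, Lemma~\ref{lemma-1} shows only $y_0,y_1$ matter in $y=y_0+py_1+p^2y_2+p^3y_3$; (c) Lemma~\ref{lemma-A} shows the constraint mod $p^3$ is a univariate quadratic in $y_0$ (so at most two branches), and after dividing by $p^3$ the residual equation has the special shape $E_1(y_0)+E_2(y_0)y_1$ with $E_2$ \emph{linear}; (d) Lemmas~\ref{lemma-2} and~\ref{lemma-B} then recover all solutions by looping over the $O(a)$ possible $\varphi$-valuations of $E_2(y_0)$, solving $E_1\equiv E_2\equiv 0\bmod\langle p,\varphi^r\rangle$, filtering the bad set where $E_2$ has too-high valuation, and setting $y_1=-E_1(y_0)/E_2(y_0)$. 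Without establishing facts of this kind (or equivalents), your sketch assumes rather than proves the polynomial bound on the branching tree, so as written it does not yield Theorem~\ref{thm1}.
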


\noindent
{\bf Remarks. 1)}
The procedure to factorize $f\bmod p^4$ also factorizes $f\bmod p^3$ and $f\bmod p^2$ (and tests for irreducibility) in randomized poly($\deg f, \log p$) time. This solves the open question of efficiently factoring $f \bmod p^3$ \cite{sircana2017factorization} and gives a more general proof for factoring $f\bmod p^2$ than the one in \cite{salagean}.

{\bf 2)} Our method can as well be used to factor a `univariate' polynomial $f\in \left(\F_p[z]/\langle \psi^k\rangle\right)[x]$, for $k\le4$ and irreducible $\psi(z) \bmod p$, in randomized poly($\deg f, \deg \psi, \log p$) time.

\smallskip
Next, we do more than just factoring $f$ modulo $p^k$ for $k\leq 4$. It is well known that Hensel lemma efficiently gives two (unique) coprime factors of $f(x)$ modulo any prime power $p^k$, given two coprime factors of $f\bmod p$; but it fails to lift when $f$ is power of an irreducible polynomial modulo $p$. We show that our method works in this case to give all the lifts $g(x) \bmod p^k$ (possibly exponentially many) of any given factor $\tilde{g}$ of $f \bmod p$, for $k\leq 4$.

\begin{theorem}
\label{thm2}
Let $p$ be prime, $k\leq 4$ and $f(x)$ be a univariate integral polynomial such that $f\bmod p$ is a power of an irreducible polynomial. Let $\tilde{g}$ be a given factor of $f \bmod p$. Then, in randomized poly($\deg f, \log p$) time, we can compactly describe (\& count) all possible factors of $f(x) \bmod p^k$ which are lifts of $\tilde{g}$ (or report that there is none).
\end{theorem}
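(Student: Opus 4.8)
The plan is to reduce, in polynomial time, the task of describing all lifts of $\tilde g$ to the root-finding problem that underlies Theorem~\ref{thm1}, to solve that problem for $k\le 4$, and to transport the answer back. Throughout, write $f\equiv\varphi^d\bmod p$ with $\varphi$ monic irreducible mod $p$ of degree $m$, and $\tilde g=\varphi^e$ with $e\le d$; replacing $\tilde g$ by $f/\tilde g$ if needed, assume $e\le d/2$.

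\textbf{Step 1: reduce to the case $\varphi=x$ over a Galois ring.} I would pass to $\mathrm{GR}(p^k,m)=\Z[x]/\langle p^k,\hat\varphi\rangle$, which is independent of the chosen monic lift $\hat\varphi$ of $\bar\varphi$. There $\varphi$ has $m$ distinct roots $\theta_1,\dots,\theta_m$ (its Galois conjugates), so multifactor coprime Hensel lifting yields $f\equiv\prod_{i=1}^m f_i\bmod p^k$ with $f_i\equiv(x-\theta_i)^d\bmod p$, the $f_i$ monic, pairwise comaximal, and Galois-conjugate. Now take any monic $g\mid f$ over $\Z/p^k$ with $g\equiv\varphi^e\bmod p$: Hensel-lifting the coprime factorization $\bar g=\prod_i(x-\theta_i)^e$ gives $g=\prod_i g_i$ with $g_i\equiv(x-\theta_i)^e\bmod p$ monic; comaximality forces $g_i\mid f_i$, and Galois-invariance of $g$ forces $g_i=\sigma^{i-1}(g_1)$, so $g=\mathrm{Nm}(g_1)$ is the norm of $g_1$ down to $(\Z/p^k)[x]$. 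Conversely every $g_1\mid f_1$ with $g_1\equiv(x-\theta_1)^e\bmod p$ gives a valid lift $g=\mathrm{Nm}(g_1)$, so the lifts of $\tilde g$ are in efficient bijection with such $g_1$. After $x\mapsto x+\theta_1$ we are left with: over $\mathrm{GR}(p^k,m)$, describe and count the monic degree-$e$ factors $G$ of a given $F\equiv x^d\bmod p$ with $G\equiv x^e\bmod p$.

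\textbf{Step 2: encode such factors as roots of a univariate $E(y)$, and solve for $k\le 4$.} This last problem is a special case of the factoring-to-root-finding reduction built for Theorem~\ref{thm1}: each factor $G$ corresponds, by an efficiently invertible bijection, to a root $y=\eta$ of an explicitly constructible univariate $E(y)$ over $\Z[x]/\langle p^k,\varphi^\ell\rangle$ for a suitable $\ell=\ell(e,d)$. The case $e=1$ already shows the shape — $G=x-\beta$ with $\beta\equiv 0\bmod p$, and $G\mid F\bmod p^k$ iff $\beta$ is a root of $F$ modulo $p^k$ with $\beta\equiv0\bmod p$, i.e.\ root-finding of $E(y):=F(y)$ over $\Z[x]/\langle p^k,\varphi\rangle$ — and for $e\ge 2$ one assembles $E$ from $F$ together with the cofactor $F/G\equiv x^{d-e}\bmod p$ in the same way. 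I would then run the $k\le 4$ root-finder: repeatedly transform $E$ along its Newton data (strip the unit part; apply a substitution $y\mapsto y_0+py'$ or $y\mapsto\varphi y'$ as dictated; recurse on a problem with strictly smaller $k$), which for $k\le 4$ branches only polynomially often and outputs a compact structured description of the entire root set, the count being a sum of products of subspace sizes. Transporting this description back through the bijection of Step 2 and then through the norm map of Step 1 — both poly-time and representation-preserving — yields the desired compact description and exact count of all lifts of $\tilde g$ modulo $p^k$, or a certificate that there are none.

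\textbf{Main obstacle.} The crux, and the reason the method stops at $k=4$, is the root-finding: when a transformed $E(y)$ degenerates modulo $p$ — for instance becomes a perfect power, so that no derivative isolates a root — one more Newton transformation is forced, and bounding the length and width of this cascade is exactly what the analysis can sustain only up to $k=4$; for $k\ge 5$ the transformations compound and the root set is no longer polynomially structured, which is the obstruction the paper only sketches. Two further points need care: proving that the factor$\leftrightarrow$root correspondence is a genuine bijection, with the multiplicity/nilpotent bookkeeping over $\Z[x]/\langle p^k,\varphi^\ell\rangle$ done so that the count is exact; and checking that taking the norm in Step 1 keeps the description compact and does not distort the count.
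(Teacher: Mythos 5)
Your core route is the same as the paper's: parametrize the lifts of $\tilde g$, reduce "being a factor of $f \bmod p^k$'' to root-finding of a univariate $E(y)$ over a local ring of the form $\Z[x]/\langle p^k,\varphi^{\ell}\rangle$, run the $k\le 4$ root-finder developed for Theorem~\ref{thm1} to get a compact description of the root set, handle $a>e/2$ by passing to the cofactor, and read off the count. The difference is your Step 1, and it is an avoidable detour that creates the very obligations you then flag as ``points needing care.'' The paper never linearizes $\varphi$: any lift of $\tilde v\varphi^a$ is, after dividing by the unit $\tilde v$, of the form $\varphi^a-py$ with $y\in(\Z/\langle p^k\rangle)[x]$, and Theorem~\ref{thm-reduction} gives the exact equivalence ($h\mid f \bmod p^k$ iff $E(y)=f\cdot\sum_{i=0}^{k-1}\varphi^{a(k-1-i)}(py)^i\equiv 0 \bmod \langle p^k,\varphi^{ak}\rangle$), with the count transferring because the cofactor is unique (Lemma~\ref{lemma-unique-factor}). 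By contrast, your Galois-ring reduction forces you to (i) re-establish the entire root-finding machinery (Lemma~\ref{lemma-A}, Lemma~\ref{lemma-B}, Algorithm~\ref{algo1}, and the \textsc{Root-Find} subroutine) over $\mathrm{GR}(p^k,m)$-coefficients rather than $\Z/\langle p^k\rangle$-coefficients --- plausible, but not what is proved in the paper as stated --- and (ii) prove that the norm map preserves both the compactness of the representative-set description and the exact count. You explicitly leave both unresolved, and they simply disappear if one works with $\varphi$ directly.

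The second, more substantive gap is that the two load-bearing ingredients are asserted rather than supplied. You posit an ``explicitly constructible'' $E(y)$ and an ``efficiently invertible bijection'' between degree-$e$ factors and its roots ``for a suitable $\ell(e,d)$,'' giving only the $e=1$ case; for $e\ge 2$ this is exactly the content of Theorem~\ref{thm-reduction} (with $\ell=ak$), including the bookkeeping of which $y$'s yield the same factor, which is what makes the count exact --- you flag this but do not prove it. Likewise, your description of the $k\le 4$ root-finder as a Newton-data cascade that ``branches only polynomially often'' is not an argument and does not match the actual mechanism that makes $k\le 4$ work: the reduction to characteristic $p$ producing at most two candidates (Lemma~\ref{lemma-A}), the special shape $E'=E_1(y_0)+E_2(y_0)y_1$ with $E_2$ linear, and the valuation sweep with the filtering step (Lemmas~\ref{lemma-2} and~\ref{lemma-B}). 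If you are permitted to cite Theorem~\ref{thm-reduction} and Theorem~\ref{thm-algo1} as black boxes --- which is all the paper's own proof of Theorem~\ref{thm2} does, plus the unit normalization $\tilde g=\tilde v\varphi^a$ and the cofactor trick for $a>e/2$ --- then your Step 1 is superfluous and your proof collapses to the paper's; if you are not, then the proposal as written leaves precisely those two components unproved.
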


\noindent
{\bf Remark.}
Theorem \ref{thm2} can be seen as a significant generalization of Hensel lifting method (Lemma \ref{lemma-hensel}) to $\Z/\langle p^k\rangle$, $k\le4$. To lift a factor $f_1$ of $f \bmod p$, Hensel lemma relies on a cofactor $f_2$ which is coprime to $f_1$. Our method needs no such assumption and it directly lifts a factor $\tilde{g}$ of $f\bmod p$ to (possibly exponentially many) factors $g(x)\bmod p^k$.

\vspace{-2mm}
\subsection{Proof technique-- Root finding over local rings} \label{sec-tech}
\vspace{-1mm}

Our proof involves two main techniques which may be of general interest.

\vspace{1mm}
\noindent
\textbf{Technique 1:} Known factoring methods $\bmod\ p$ work by first reducing the problem to that of root finding $\bmod\ p$. In this work, we efficiently reduce the problem of factoring $f(x)$ modulo the principal ideal $\langle p^k \rangle$ to that of finding roots of some polynomial $E(y) \in (\Z[x])[y]$ modulo a {\em bi-generated} ideal $\langle p^k, \varphi(x)^\ell \rangle$, where $\varphi(x)$ is an irreducible factor of $f(x) \bmod p$. This technique works for all $k\ge1$.

\medskip
\noindent
\textbf{Technique 2:} Next, we find a root of the equation $E(y)\equiv 0 \bmod \langle p^k, \varphi(x)^\ell\rangle$, assuming $k\leq 4$. With the help of the special structure of $E(y)$ we will efficiently find all the roots $y$ (possibly exponentially many) in the local ring $\Z[x]/\langle p^k, \varphi(x)^\ell\rangle$. 

It remains open whether this technique extends to $k=5$ and beyond (even to find a single root of the equation). The possibility of future extensions of our technique is discussed in Appendix~\ref{sec-comparison}.

\vspace{-2mm}
\subsection{Proof overview}
\vspace{-1mm}

\medskip\noindent
\textbf{Proof idea of Theorem \ref{thm1}:} 
Firstly, assume that the given degree $d$ integral polynomial $f$ satisfies $f(x)\equiv \varphi^e \bmod p$ for some $\varphi(x) \in \Z[x]$ which is irreducible mod $p$. Otherwise, using Hensel lemma (Lemma \ref{lemma-hensel}) we can efficiently factor $f\bmod p^k$.

Any factor of such an $f \bmod p^k$ must be of the form $(\varphi^a-p y)\bmod p^k$, for some $1\le a< e$ and $y \in (\Z/\langle p^k\rangle)[x]$. In Theorem \ref{thm-reduction}, we first reduce the problem of finding such a factor $(\varphi^a-p y)$ of $f \bmod p^k$ to finding roots of some $E(y) \in (\Z[x])[y]$ in the local ring $\Z[x]/ \langle p^k,\varphi^{ak}\rangle$. This is inspired by the $p$-adic power series expansion of the quotient $f/(\varphi^a-p y)$. On going $\bmod\ p^k$ we get a polynomial in $y$ of degree $(k-1)$; which we want to be divisible by $\varphi^{ak}$.

The root $y$ of $E(y) \bmod \langle p^k, \varphi^{ak}\rangle$ can be further decomposed into coordinates $y_0, y_1,\ldots$, $y_{k-1} \in \F_p[x]/\langle \varphi^{ak}\rangle$ such that $y=: y_0+p y_1+\ldots+p^{k-1} y_{k-1} \bmod \langle p^k,\varphi^{ak}\rangle$. When we take $k=4$, it turns out that the root $y$ only depends on the coordinates $y_0$ and $y_1$ (i.e.~$y_2, y_3$ can be picked arbitrarily).

Next, we reduce the problem of root finding of $E(y_0+p y_1)$ in the ring $\Z[x]/\langle p^4, \varphi^{4 a}\rangle$ to root finding in characteristic $p$; of some $E'(y_0,y_1)$ in the ring $\F_p[x]/\langle \varphi^{4 a}\rangle$ (Lemma \ref{lemma-A}). We take help of a subroutine \Call{Root-Find}{} given by \cite{berthomieu2013polynomial} which can efficiently find all the roots of a univariate $g(y)$ in the ring $\Z/\langle p^j\rangle$. We need a slightly generalized version of it, to find all the roots of a given $g(y)$ in the ring $\F_p[x]/\langle \varphi(x)^j \rangle$ (Appendix \ref{appendix-rootfind}). 

Note that $y_0, y_1$ are in the ring $\F_p[x]/\langle \varphi^{4a}\rangle$ and so they can be decomposed as $y_0=: y_{0,0}+\varphi y_{0,1}+\ldots+\varphi^{4a-1} y_{0,4a-1}$ and $y_1=: y_{1,0}+\varphi y_{1,1}+\ldots+\varphi^{4a-1} y_{1,4a-1}$, with all $y_{i,j}$'s in the field $\F_p[x]/\langle \varphi\rangle$. 

To get $E'(y_0,y_1) \bmod \langle p,\varphi^{4a}\rangle$ the idea is: to first divide by $p^2$, and then to go modulo the ideal $\langle p,\varphi^{4a}\rangle$. Apply Algorithm \Call{Root-Find}{} to solve $E(y_0+p y_1)/p^2\equiv 0 \bmod \langle p,\varphi^{4a}\rangle$. This allows us to fix some part of $y_0$, say $a_{0} \in \F_p/\langle \varphi^{4a}\rangle$, and we can replace it by $a_{0}+\varphi^{i_0}y_0$, $i_0\ge1$. Thus, $p^3 | E(a_{0}+\varphi^{i_0}y_0+p y_1) \bmod \langle p^4,\varphi^{4a}\rangle$ and we divide out by this $p^3$ (\& change the modulus to $\langle p,\varphi^{4a}\rangle$). In Lemma \ref{lemma-A} we show that when we go modulo the ideal $\langle p,\varphi^{4a}\rangle$ (to find $a_0$), we only need to solve a univariate in $y_0$ using \Call{Root-Find}{}. So, we only need to fix some part of $y_0$, that we called $a_0$, and $y_1$ is irrelevant. Finally, we get $E'(y_0,y_1)$ such that $E'(y_0,y_1):= E(a_0+\varphi^{i_0}y_0+p y_1)/p^3 \bmod \langle p,\varphi^{4a}\rangle$. Importantly, the process yields at most {\em two} possibilities of $E'$ (resp.~$a_0$) to deal with.

Lemma \ref{lemma-A} also shows that the bivariate $E'(y_0,y_1)$ is a special one of the form $E'(y_0,y_1)\equiv E_1(y_0) + E_2(y_0)y_1 \bmod \langle p,\varphi^{4a}\rangle$, where $E_1(y_0) \in (\F_p[x]/\langle \varphi^{4a}\rangle)[y_0]$ is a cubic univariate polynomial and $E_2(y_0) \in (\F_p[x]/\langle \varphi^{4a}\rangle)[y_0]$ is a linear univariate polynomial. We exploit this special structure to represent $y_1$ as a rational function of $y_0$, i.e.~$y_1\equiv -E_1(y_0)/E_2(y_0) \bmod \langle p,\varphi^{4a}\rangle$. The important issue is that, we can calculate $y_1$ only when on some specialization $y_0=a_0$, the division by $E_2(a_0)$ is well defined. So what we do is, we guess each value of $0\le r\le 4a$ and ensure that the valuation (wrt $\varphi$ powers) of $E_1(y_0)$ is at least $r$ but that of $E_2(y_0)$ is {\em exactly} $r$. Once we find such a $y_0$, we can efficiently compute $y_1$ as $y_1\equiv -(E_1(y_0)/\varphi^r)/(E_2(y_0)/\varphi^r) \bmod \langle p,\varphi^{4a-r}\rangle$.

To find $y_0$, we find common solution of the two equations: $E_1(y_0)\equiv E_2(y_0) \equiv 0 \bmod \langle p,\varphi^{r}\rangle$, for each guessed value $r$, using Algorithm \Call{Root-Find}{}. Since the polynomial $E_2(y_0)$ is linear, it is easy for us to filter all $y_0$'s for which valuation of $E_2(y_0)$ is {\em exactly} $r$ (Lemma \ref{lemma-B}). Thus, we could efficiently find all $(y_0,y_1)$ pairs that satisfy the equation $E'(y_0,y_1) \equiv 0 \bmod \langle p,\varphi^{4a}\rangle$.

\medskip\noindent
\textbf{Proof idea of Theorem \ref{thm2}:} 
If $f\equiv \varphi^e \bmod p$ then any lift $g(x)$ of a factor $\tilde{g}(x)\equiv \varphi^a \bmod p$ of $f\bmod p$ will be of the form $g\equiv (\varphi^a-p y) \bmod p^k$. So basically we want to find all the $y$'s $\bmod\ p^{k-1}$ that appear in the proof idea of Theorem \ref{thm1} above. This can be done easily, because Algorithm \Call{Root-Find}{} (Appendix \ref{appendix-rootfind}) \cite{berthomieu2013polynomial} describes all possible $y_0$'s in a compact data structure. Moreover, using this, a count of all $y$'s could be provided as well.

\section{Preliminaries}
\label{sec-pre}
\vspace{-1mm}



Let $R(+,.)$ be a ring and $S$ be a non-empty subset of $R$. The product of the set $S$ with a scalar $a\in R$ is defined as $aS:=\{as\mid s\in S \}$. 
Similarly, the sum of a scalar $u\in R$ with the set $S$ is defined as $u+S:=\{ u+s \mid s\in S \}$. 
Note that the product and the sum operations used inside the set are borrowed from the underlying ring $R$. Also note that if $S$ is the empty set then so are $aS$ and $u+S$ for any $a,u\in R$.


\smallskip {\bf Representatives.}
The symbol `*' in a ring $R$, wherever appears, denotes all of ring $R$. For example, suppose $R= \Z/\langle p^k \rangle$ for a prime $p$ and a positive integer $k$. 
In this ring, we will use the notation $y=y_0+p y_1 + \ldots + p^i y_i + p^{i+1} *$, where $i+1 < k$ and each $y_j \in R/\langle p\rangle$, to denote a set $S_y\subseteq R$ such that
$$ S_y=\{ y_0+ \ldots + p^i y_i + p^{i+1} y_{i+1}+ \ldots + p^{k-1} y_{k-1} \mid \forall y_{i+1},\ldots,y_{k-1} \in R/\langle p\rangle \}. $$
Notice that the number of elements in $R$ represented by $y$ is $|S_y|=p^{k-i-1}$. 


We will sometimes write the set $y=y_0+p y_1 + \ldots + p^i y_i + p^{i+1} *$ succinctly as $y = v + p^{i+1} *$, where $v\in R$ stands for $v=y_0+p y_1 + \ldots + p^i y_i$. 

%

In the following sections, we will add and multiply the set $\{*\}$ with scalars from the ring $R$. Let us define these operations as follows ($*$ is treated as an unknown)
\begin{itemize}
	\item $u+\{*\}:=\{ u+* \}$ and $u\{ *\}:=\{ u*\}$, where $u\in R$.
	\item $c+\{a+{b*}\}=\{ (a+c)+{b*}\}$ and $c\{a+{b*}\}=\{ac+{bc*} \}$, where $a,b,c \in R$.
\end{itemize}

Another important example of the $*$ notation: Let $R= \F_p[x]/\langle\varphi(x)^k\rangle$ for a prime $p$ and an irreducible $\varphi\bmod p$. 
In this ring, we use the notation $y=y_0+ \varphi y_1 + \ldots + \varphi^i y_i + \varphi^{i+1} *$, where $i+1 < k$ and each $y_j \in R/\langle \varphi\rangle$, to denote a set $S_y\subseteq R$ such that
\[ S_y=\{ y_0+ \ldots + \varphi^i y_i + \varphi^{i+1} y_{i+1}+ \ldots + \varphi^{k-1} y_{k-1} \mid \forall y_{i+1},\ldots,y_{k-1} \in R/\langle \varphi\rangle \}.\]

\smallskip {\bf Zerodivisors.}
Let $R[x]$ be the ring of polynomials over $R=\Z/\langle p^k \rangle$. The following lemma about zero divisors in $R[x]$ will be helpful.

\begin{lemma}
\label{lemma-unique-factor}
A polynomial $f \in R[x]$ is a zero divisor iff $f \equiv 0 \bmod p$. Consequently, for any polynomials $f,g_1,g_2 \in R[x]$ and $f \not\equiv 0 \bmod p$, $f(x) g_1(x) = f(x) g_2(x)$ implies $g_1(x) = g_2(x)$. 
\end{lemma}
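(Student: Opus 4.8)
The statement to prove is Lemma~\ref{lemma-unique-factor}: a polynomial $f \in R[x]$ with $R = \Z/\langle p^k\rangle$ is a zero divisor iff $f \equiv 0 \bmod p$, and consequently multiplication by a non-zero-divisor is cancellable.

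\textbf{Proposal.} The plan is to prove the biconditional in both directions and then deduce the cancellation corollary as an immediate consequence. For the easy direction, suppose $f \equiv 0 \bmod p$; then every coefficient of $f$ is divisible by $p$, so $p^{k-1} f \equiv 0 \bmod p^k$, and since $p^{k-1} \not\equiv 0$ in $R$ (it is a nonzero constant polynomial), $f$ is a zero divisor. For the converse, suppose $f$ is a zero divisor, so there is a nonzero $g \in R[x]$ with $fg = 0$ in $R[x]$. I want to show $f \equiv 0 \bmod p$. Assume for contradiction that $f \not\equiv 0 \bmod p$, i.e.\ some coefficient of $f$ is a unit mod $p$ (equivalently a unit in $R$). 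The idea is to look at the situation one coefficient at a time using the $p$-adic filtration of $R$: write $j$ for the largest integer such that $g \not\equiv 0 \bmod p^j$ (so $1 \le j \le k$, and $g = p^{j-1} g'$ with $g' \not\equiv 0 \bmod p$). Then $fg = 0$ gives $p^{j-1} f g' \equiv 0 \bmod p^k$, i.e.\ $f g' \equiv 0 \bmod p^{k-j+1}$; reducing mod $p$, we get $\bar f \bar{g'} = 0$ in $\F_p[x]$ with $\bar f \ne 0$ and $\bar{g'} \ne 0$, contradicting that $\F_p[x]$ is an integral domain.

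For the corollary, given $f g_1 = f g_2$ with $f \not\equiv 0 \bmod p$, rewrite as $f(g_1 - g_2) = 0$. By the first part (contrapositive), since $f$ is not a zero divisor, $g_1 - g_2$ must be $0$ unless it is itself witnessed\ldots more precisely: $f \cdot (g_1-g_2) = 0$ means either $g_1 - g_2 = 0$ or $f$ is a zero divisor; since $f$ is not a zero divisor, $g_1 = g_2$.

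\textbf{Main obstacle.} There is no deep obstacle here; the lemma is elementary. The one point that needs a little care is the reduction step in the converse direction: one must pick the right power of $p$ to factor out of $g$ so that after dividing, the cofactor is nonzero mod $p$, and then check that the product relation survives modulo a high enough power of $p$ to still give a genuine identity mod $p$ in $\F_p[x]$. Concretely, if $g \equiv 0 \bmod p^{j-1}$ but $g \not\equiv 0 \bmod p^j$, then writing $g = p^{j-1}g'$ we have $0 = fg = p^{j-1}(fg')$ in $\Z/\langle p^k\rangle[x]$, which says every coefficient of $fg'$ is divisible by $p^{k-j+1}$; since $j \le k$ we have $k - j + 1 \ge 1$, so in particular $fg' \equiv 0 \bmod p$, and now both $\bar f, \bar g' \in \F_p[x]$ are nonzero, contradiction. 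I would also remark that the same argument shows more: the zero divisors of $R[x]$ form precisely the ideal $pR[x]$, which is the kernel of the reduction $R[x] \to \F_p[x]$.
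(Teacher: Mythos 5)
Your proof is correct, and it reaches the conclusion by a slightly different route than the paper. The paper argues directly on a single coefficient of the product: it picks $i$, the largest index at which $f$ has a coefficient that is a unit mod $p$, and $j$, the largest index at which a coefficient of $g$ attains the minimum $p$-valuation, and checks that the coefficient of $x^{i+j}$ in $fg$ has exactly that minimum valuation, hence is nonzero in $R$ — a self-contained valuation/leading-term argument. You instead factor out the exact power of $p$ from $g$, writing $g = p^{j-1}g'$ with $g' \not\equiv 0 \bmod p$, observe that $fg=0$ in $R[x]$ forces $fg' \equiv 0 \bmod p^{\,k-j+1}$ with $k-j+1\ge 1$, and then reduce mod $p$ to contradict the fact that $\F_p[x]$ is an integral domain. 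This is essentially the same underlying idea (minimal $p$-valuation), but your version outsources the coefficient combinatorics to the known domain property of $\F_p[x]$, which makes it shorter and also yields your correct side remark that the zero divisors of $R[x]$ are exactly the ideal $pR[x]$; the paper's version is fully explicit and in effect reproves that fact inline. One small wording slip: you introduce $j$ as ``the largest integer such that $g \not\equiv 0 \bmod p^j$'' when you mean the smallest such $j$ (equivalently, $j-1$ is the minimum $p$-valuation among the coefficients of $g$); your later restatement ``$g \equiv 0 \bmod p^{j-1}$ but $g \not\equiv 0 \bmod p^j$'' is the intended and correct definition, and the argument as carried out there is sound, including the cancellation consequence via $f(g_1-g_2)=0$.
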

\begin{proof}
If $f \equiv 0 \bmod p$ then $f(x) p^{k-1}$ is zero, and $f$ is a zero divisor. 

For the other direction, let $f \not\equiv 0 \bmod p$ and assume $f(x)g(x) = 0$ for some non-zero $g\in R[x]$. Let
\begin{itemize}
	\item $i$ be the biggest integer such that the coefficient of $x^i$ in $f$ is non-zero modulo $p$,
	\item and $j$ be the biggest integer such that the coefficient of $x^j$ in $g$ has minimum valuation with respect to $p$.
\end{itemize}
Then, the coefficient of $x^{i+j}$ in $f\cdot g$ has same valuation as the coefficient of $x^j$ in $g$, implying that the coefficient is nonzero. This contradicts the assumption $f(x)g(x) = 0$. 

The consequence follows because $f \not \equiv 0 \bmod p$ implies that $f$ cannot be a zero divisor.
\end{proof}

\smallskip {\bf Quotient ideals.}
We define the quotient ideal (analogous to division of integers) and look at some of its properties. 

\begin{definition}[Quotient Ideal]
Given two ideals $I$ and $J$ of a commutative ring $R$, we define the quotient of $I$ by $J$ as,
\[ I:J := \{ a\in R \mid aJ \subseteq I \}. \]
\end{definition}

It can be easily verified that $I:J$ is an ideal. Moreover, we can make the following observations about quotient ideals.

\begin{claim}[Cancellation]
\label{claim-1}
Suppose $I$ is an ideal of ring $R$ and $a,b,c$ are three elements in $R$. By definition of quotient ideals, $ca\equiv cb \bmod I$ iff $a\equiv b \bmod I:\langle c\rangle$.
\end{claim}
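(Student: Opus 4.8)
The plan is to simply unwind the two definitions and check that they coincide; no machinery beyond the definition of a quotient ideal is needed. First I would record what the right-hand side says: writing $\langle c\rangle$ for the principal ideal generated by $c$, the statement $a\equiv b \bmod I:\langle c\rangle$ means $a-b\in I:\langle c\rangle$, which by the definition $I:J=\{r\in R\mid rJ\subseteq I\}$ is exactly the condition $(a-b)\langle c\rangle\subseteq I$. Since the rings in this paper are commutative with identity, $\langle c\rangle=cR=\{cr\mid r\in R\}$, so $(a-b)\langle c\rangle=\{(a-b)cr\mid r\in R\}$.

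Next I would argue that this set lies inside $I$ if and only if the single element $(a-b)c$ does. The forward implication is immediate: specialize $r=1$. The reverse implication uses that $I$ is an ideal and hence absorbs multiplication by arbitrary ring elements, so $(a-b)c\in I$ forces $(a-b)cr\in I$ for every $r\in R$. Finally, by distributivity and commutativity $(a-b)c=ca-cb$, so the condition $(a-b)c\in I$ is literally the statement $ca\equiv cb\bmod I$. Composing these three equivalences — $ca\equiv cb\bmod I \iff (a-b)c\in I \iff (a-b)\langle c\rangle\subseteq I \iff a-b\in I:\langle c\rangle$ — establishes the claim in both directions.

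There is essentially no obstacle here; the only mild point of care is the identification $\langle c\rangle=cR$, which relies on $R$ having a multiplicative identity (true for $\Z/\langle p^k\rangle$ and the polynomial rings over it that we use). Without a unit one would instead have $\langle c\rangle=cR+\Z c$, and the argument would need only a one-line extra remark to the same effect. As the claim's own phrasing indicates, it is a direct consequence of the definition of the quotient ideal, so I would keep the written proof to a couple of sentences chaining the equivalences above.
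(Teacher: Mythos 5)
Your proof is correct and matches the paper, which offers no separate argument at all: the claim is stated as following directly from the definition of the quotient ideal, and your chain of equivalences $ca\equiv cb \bmod I \iff (a-b)c\in I \iff (a-b)\langle c\rangle\subseteq I \iff a-b\in I:\langle c\rangle$ is exactly that unwinding, with the absorption property of $I$ supplying the only nontrivial step. Nothing further is needed.
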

%
\begin{claim}
\label{claim-2}
Let $p$ be a prime and $\varphi \in (\Z/\langle p^k\rangle)[x]$ be such that $\varphi \not\equiv 0 \bmod p$. Given an ideal $I:=\langle p^l,\phi^m \rangle$ of $\Z[x]$,
\begin{enumerate}
\item $I:\langle p^i\rangle = \langle p^{l-i},\phi^m\rangle$, for $i\le l$, and 
\item $I:\langle \phi^j\rangle = \langle p^l,\phi^{m-j}\rangle$, for $j\le m$.
\end{enumerate}

\end{claim}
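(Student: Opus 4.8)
The plan is to prove Claim~\ref{claim-2} directly from the definition of quotient ideals, working in the polynomial ring $\Z[x]$ where every ideal membership can be checked by writing the element as a combination of the generators. Both parts are symmetric in flavour, so I would set up a single lemma-style argument: to show $I : \langle c \rangle = J$ for the claimed $J$, I verify the two inclusions $J \subseteq I:\langle c\rangle$ and $I:\langle c\rangle \subseteq J$. The easy inclusion is $J \subseteq I:\langle c\rangle$: if $a \in J$ then $ca$ is visibly in $I$ by multiplying each generator of $J$ by $c$, so it amounts to checking $c\cdot p^{l-i} = p^i\cdot p^{l-i}\in\langle p^l,\varphi^m\rangle$ (trivially, it equals $p^l$) and $c\cdot\varphi^m\in I$ (it is a multiple of $\varphi^m$) — and symmetrically for the second part, $\varphi^j\cdot p^l\in I$ and $\varphi^j\cdot\varphi^{m-j}=\varphi^m\in I$.

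For the reverse inclusion I would take $a\in I:\langle c\rangle$, so $ca = \alpha p^l + \beta \varphi^m$ for some $\alpha,\beta\in\Z[x]$, and argue that $a$ must lie in $J$. For part (1), $c = p^i$ with $i\le l$: from $p^i a = \alpha p^l + \beta\varphi^m$ I want to conclude $a\in\langle p^{l-i},\varphi^m\rangle$. The point is to divide by $p^i$. Reducing the identity modulo $p^i$ gives $\beta\varphi^m \equiv 0 \bmod p^i$ in $\Z[x]$; since $\varphi\not\equiv 0\bmod p$, the polynomial $\varphi^m$ is not a zero divisor modulo $p^i$ (by the same valuation argument as in Lemma~\ref{lemma-unique-factor}, applied over $\Z/\langle p^i\rangle$), forcing $\beta\equiv 0\bmod p^i$, say $\beta = p^i\beta'$. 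Then $p^i a = \alpha p^l + p^i\beta'\varphi^m$, and cancelling $p^i$ (legitimate in the domain $\Z[x]$) yields $a = \alpha p^{l-i} + \beta'\varphi^m \in \langle p^{l-i},\varphi^m\rangle$, as desired.

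For part (2), $c = \varphi^j$ with $j\le m$: from $\varphi^j a = \alpha p^l + \beta\varphi^m$ I want $a\in\langle p^l,\varphi^{m-j}\rangle$. Here I would reduce modulo $p^l$, getting $\varphi^j a \equiv \beta\varphi^m \bmod p^l$, i.e. $\varphi^j(a - \beta\varphi^{m-j})\equiv 0\bmod p^l$; again $\varphi^j$ is not a zero divisor modulo $p^l$, so $a\equiv \beta\varphi^{m-j}\bmod p^l$, which says exactly $a\in\langle p^l,\varphi^{m-j}\rangle$. The main obstacle — really the only subtlety — is justifying the cancellation of $\varphi$-powers (and $p$-powers) modulo a prime power, i.e. that $\varphi^m$ with $\varphi\not\equiv0\bmod p$ is a non-zero-divisor in $(\Z/\langle p^i\rangle)[x]$; this is precisely the content of Lemma~\ref{lemma-unique-factor} (a zero divisor must be $\equiv 0\bmod p$, and a power of $\varphi\not\equiv0\bmod p$ is not), so I would simply cite it. Everything else is bookkeeping with the generators, and I would present parts (1) and (2) in that order.
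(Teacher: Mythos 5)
Your proof is correct and follows essentially the same route as the paper: verify both inclusions, with the only nontrivial step being that a power of $\varphi\not\equiv 0\bmod p$ is a non-zero-divisor modulo a prime power (Lemma~\ref{lemma-unique-factor}), which licenses cancelling $p^i$ (resp.\ $\varphi^j$). The paper only writes out part (1), asserting $p^i\mid c_2$ directly, while you make that divisibility explicit and also spell out part (2); this is just a more detailed rendering of the same argument.
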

\begin{proof}
We will only prove part $(1)$, as proof of part $(2)$ is similar. 
If $c \in \langle p^{l-i}, \varphi^{m}\rangle$ then there exists $c_1,c_2 \in \Z[x]$, such that, $c= c_1 p^{l-i} + c_2 \varphi^m$. Multiplying by $p^i$,
\[ p^i c= c_1 p^{l} + c_2 p^i \varphi^m \in I \Rightarrow c \in I:\langle p^i \rangle.\]
To prove the reverse direction, if $c \in I: \langle p^i \rangle$ then there exists $c_1,c_2 \in \Z[x]$, such that, $p^i c= c_1 p^{l} + c_2 \varphi^m$. 
Since $i\le l$ and $p\not |\ \varphi$, we know $p^i | c_2$. So,

$c= c_1 p^{l-i} + (c_2/ p^i) \varphi^m \ \Rightarrow\  c\in \langle p^{l-i},\phi^m\rangle.$   	
\end{proof}

\begin{lemma}[Compute quotient]
\label{lemma-compute-g-modp}
Given a polynomial $\varphi \in \Z[x]$ not divisible by $p$, define $I$ to be the ideal $\langle p^l,\phi^m\rangle$ of $\Z[x]$. 
If $g(y) \in (\Z[x])[y]$ is a polynomial such that $g(y) \equiv 0 \bmod \langle p,\phi^m \rangle$, then $p | g(y) \bmod I$ and $g(y)/p \bmod I:\langle p\rangle$ is efficiently computable.
\end{lemma}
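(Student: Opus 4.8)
The statement has two parts: first that $p \mid g(y) \bmod I$, and second that the quotient $g(y)/p$ is well-defined modulo $I:\langle p\rangle$ and can be computed efficiently. I would handle the divisibility claim first, since the computational claim hinges on it.

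For the divisibility, the plan is to work coefficient-wise in $y$. Write $g(y) = \sum_j c_j(x)\, y^j$ with each $c_j \in \Z[x]$. The hypothesis $g(y) \equiv 0 \bmod \langle p, \varphi^m\rangle$ says that each $c_j(x) \in \langle p, \varphi^m\rangle$ as an ideal of $\Z[x]$, i.e. $c_j = p\, a_j + \varphi^m b_j$ for some $a_j, b_j \in \Z[x]$. Then modulo $I = \langle p^l, \varphi^m\rangle$ we have $c_j \equiv p\, a_j \bmod I$ (the $\varphi^m b_j$ term vanishes in $I$, provided $l \geq 1$, which is implicit), so indeed $p \mid c_j \bmod I$ for every $j$, hence $p \mid g(y) \bmod I$.

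For the quotient, the point is that dividing by $p$ in the presence of the modulus $I$ requires a cancellation, and this is exactly Claim~\ref{claim-1} together with Claim~\ref{claim-2}: from $p \cdot 0 \equiv p \cdot (g/p) \bmod I$ we get $g/p$ is determined modulo $I:\langle p\rangle$, and by Claim~\ref{claim-2}(1) this ideal is $\langle p^{l-1}, \varphi^m\rangle$. Concretely, from the representation $c_j = p\,a_j + \varphi^m b_j$ one reads off $c_j/p \equiv a_j \bmod \langle p^{l-1}, \varphi^m\rangle$; so computing the quotient amounts to extracting the coefficients $a_j$. To do this efficiently: reduce each $c_j$ modulo $\varphi^m$ over $\Z$ (or just modulo $p^l$) to get its canonical remainder, divide that integer polynomial's coefficients by $p$ (which is exact after reduction, since $c_j \bmod \varphi^m$ is divisible by $p$ in $\Z[x]/\langle \varphi^m\rangle$ up to adding a multiple of $\varphi^m$ — one may need to add a suitable multiple of $\varphi^m$ to make each coefficient divisible by $p$, which is possible since $p \nmid \varphi$ so $\varphi^m$ is a unit times a monic-ish polynomial mod $p$), then reduce modulo $p^{l-1}$. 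All of these are polynomial-arithmetic operations in $\poly(\deg g, m\deg\varphi, l\log p)$ time.

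**Main obstacle.** The genuinely delicate point is not the existence of the quotient — that is immediate from the quotient-ideal formalism — but making the extraction of $g(y)/p$ \emph{canonical and efficient}. The subtlety is that $c_j \equiv p\,a_j \bmod I$ does not pin down $a_j$ as an element of $\Z[x]$; one must choose the reduced representative, and to divide by $p$ coefficient-by-coefficient one may first have to translate $c_j$ by an element of $\langle \varphi^m\rangle$ so that every coefficient becomes a multiple of $p$ over $\Z$. Showing this translation exists and is computable uses that $\varphi \not\equiv 0 \bmod p$, so that the leading coefficient of $\varphi$ is a unit mod $p$ and division-with-remainder by $\varphi^m$ behaves well modulo powers of $p$; I would spell out this reduction step carefully and note it is the only place the hypothesis $p \nmid \varphi$ is used.
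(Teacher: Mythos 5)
Your proposal is correct and takes essentially the same route as the paper: write $g(y)=p\,c_1+\varphi^m c_2$ (you do this coefficient-wise in $y$), observe that the $\varphi^m$ part vanishes modulo $I$, and invoke the cancellation property of quotient ideals (Claim~\ref{claim-1}, with Claim~\ref{claim-2} giving $I:\langle p\rangle=\langle p^{l-1},\varphi^m\rangle$) to define and compute $g/p$ by dividing the reduced-form coefficients by $p$. Your worry about first adding a multiple of $\varphi^m$ is unneeded — since $g\equiv p\,c_1\bmod I$, the canonical reduction modulo $\langle p^l,\varphi^m\rangle$ (division with remainder by $\varphi^m$, valid as $p\nmid\varphi$, then coefficients mod $p^l$) automatically has all coefficients divisible by $p$ — but this is a cosmetic point, not a gap.
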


\begin{proof}
The equation $g(y)\equiv 0 \bmod \langle p, \phi^m\rangle$ implies $g(y) = p c_1(y)  + \varphi^m c_2(y)$ for some polynomials $c_1(y), c_2(y) \in \Z[x][y]$. 
Going modulo $I$, $g(y)\equiv p c_1(y) \bmod I $. Hence, $p | g(y) \bmod I$ and $g(y)/p\equiv c_1(y) \bmod I:\langle p\rangle$ (Claim \ref{claim-1}).

If we write $g$ in the reduced form modulo $I$, then the polynomial $g(y)/p$ can be obtained by dividing each coefficient of $g(y)\bmod I$ by $p$. 
\end{proof}



\section{Main Results: Proof of Theorems \ref{thm1} and \ref{thm2}}
 \label{sec-main}
\vspace{-1mm}



Our task is to factorize a univariate integral polynomial $f(x)\in \Z[x]$ of degree $d$ modulo a prime power $p^k$. Without loss of generality, we can assume that $f(x) \not \equiv  0 \bmod p$. 
Otherwise, we can efficiently divide $f(x)$ by the highest power of $p$ possible, say $p^l$, such that $f(x) \equiv p^l \tilde{f}(x) \bmod p^k$ and $\tilde{f}(x) \not \equiv 0 \bmod p$.
In this case, it is equivalent to factorize $\tilde{f}$ instead of $f$.

To simplify the input further, write $f \bmod p$ (uniquely) as a product of powers of coprime irreducible polynomials. 
If there are two coprime factors of $f$, using Hensel lemma (Lemma \ref{lemma-hensel}), we get a non-trivial factorization of $f$ modulo $p^k$. So, we can assume that $f$ is a power of a monic irreducible polynomial $\varphi\in\Z[x]$ modulo $p$. In other words, we can efficiently write $f\equiv \varphi^e + p l \bmod p^k$ for a polynomial $l$ in $(\Z/\langle p^k\rangle) [x]$. We have $e\cdot\deg\varphi\le \deg f$, for the integral polynomials $f$ and $\varphi$.

\subsection{Factoring to Root-finding}
\label{fact-root}
\vspace{-1mm}

By the preprocessing above, we only need to find factors of a polynomial $f$ such that $f\equiv \varphi^e + p l \bmod p^k$, where $\varphi$ is an irreducible polynomial modulo $p$. Up to multiplication by units, any nontrivial factor $h$ of $f$ has the form $h \equiv \varphi^a - p y$, where $a < e$ and $y$ is a polynomial in $(\Z/\langle p^k\rangle) [x]$.
   
Let us denote the ring $\Z[x]/\langle p^k,\varphi^{ak} \rangle$ by $R$. Also, denote the ring $\Z[x]/\langle p,\varphi^{ak} \rangle$ by $R_0$. 
We define an auxiliary polynomial $E(y) \in R[y]$ as 
$$ E(y) := f(x) (\varphi^{a(k-1)} + \varphi^{a(k-2)}(py)+ \ldots + \varphi^a (py)^{k-2}+ (py)^{k-1}). $$ 


Our first step is to reduce the problem of factoring $f(x) \bmod p^k$ to the problem of finding roots of the  univariate polynomial $E(y)$ in $R$. 
Thus, we convert the problem of finding factors of $f(x)\in \Z[x]$ modulo a principal ideal $\langle p^k\rangle$ to root finding of a polynomial $E(y) \in (\Z[x])[y]$ modulo a bi-generated ideal $\langle p^k, \varphi^{ak} \rangle$.  

\begin{theorem}[Reduction theorem]
\label{thm-reduction}
Given a prime power $p^k$; let $f(x),h(x)\in \mathbb{Z}[x]$ be two polynomials of the form $f(x)\equiv \varphi^e+ p l \bmod p^k$ and $h(x)\equiv \varphi^a-p y \bmod p^k$.
Here $y,l$ are elements of $(\Z/\langle p^k\rangle) [x]$ and $a\leq e$. Then, $h$ divides $f$ modulo $p^k$ if and only if
$$ E(y) \ =\ f(x) (\varphi^{a(k-1)} + \varphi^{a(k-2)}(py)+ \ldots + \varphi^a (py)^{k-2}+ (py)^{k-1}) \ \equiv\ 0 \bmod \langle p^k,\varphi^{ak}\rangle . $$
\end{theorem}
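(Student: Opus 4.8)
The plan is to base the whole argument on the elementary polynomial identity $A^{k}-B^{k}=(A-B)\bigl(A^{k-1}+A^{k-2}B+\ldots+B^{k-1}\bigr)$, specialized to $A=\varphi^{a}$ and $B=py$. Set
\[ Q\ :=\ \varphi^{a(k-1)}+\varphi^{a(k-2)}(py)+\ldots+\varphi^{a}(py)^{k-2}+(py)^{k-1}, \]
so that $E(y)=f(x)\,Q$ by its very definition. The identity then gives $h\,Q=(\varphi^{a}-py)\,Q=\varphi^{ak}-(py)^{k}=\varphi^{ak}-p^{k}y^{k}$, hence the single congruence $h\,Q\equiv\varphi^{ak}\bmod p^{k}$ on which both directions rest; note it is insensitive to the representative of $y$ modulo $p^{k}$ because $(py)^{k}=p^{k}y^{k}\equiv 0\bmod p^{k}$.

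For the forward implication, assuming $h\mid f\bmod p^{k}$ I would write $f\equiv h g\bmod p^{k}$ with $g\in(\Z/\langle p^{k}\rangle)[x]$, multiply by $Q$, and use the above congruence: $E(y)=fQ\equiv g(hQ)\equiv g\,\varphi^{ak}\bmod p^{k}$, which is patently $\equiv 0\bmod\langle p^{k},\varphi^{ak}\rangle$.

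For the converse, starting from $E(y)\equiv 0\bmod\langle p^{k},\varphi^{ak}\rangle$ I would write $E(y)=p^{k}A+\varphi^{ak}B$ with $A,B\in\Z[x]$, reduce modulo $p^{k}$, and substitute $\varphi^{ak}\equiv hQ$ to obtain $fQ\equiv B\,hQ\bmod p^{k}$, i.e., $(f-Bh)Q\equiv 0\bmod p^{k}$. The final step is to cancel $Q$: since $Q\equiv\varphi^{a(k-1)}\bmod p$ and $\varphi\not\equiv 0\bmod p$, the polynomial $Q$ is not a zero divisor in $(\Z/\langle p^{k}\rangle)[x]$ by Lemma~\ref{lemma-unique-factor}, so $f\equiv Bh\bmod p^{k}$ and therefore $h\mid f\bmod p^{k}$.

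I expect no genuine obstacle here; the one step deserving attention is this cancellation of $Q$ modulo $p^{k}$, which is precisely where the preprocessing earlier in this section---ensuring that $f$, and hence $\varphi$, is nonzero modulo $p$---is invoked through Lemma~\ref{lemma-unique-factor}. Everything else is bookkeeping of the ideal membership $E(y)=p^{k}A+\varphi^{ak}B$ and of the telescoping sum defining $Q$.
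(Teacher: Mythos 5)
Your proposal is correct; both directions go through. It is, however, organized differently from the paper's proof, which works in the ring of fractions $Q$ of $(\Z/\langle p^k\rangle)[x]$: there, for ``$E\equiv 0 \Rightarrow h\mid f$'' one observes that $E(y)/\varphi^{ak}$ is an honest polynomial and multiplies it by $h$ to recover $f$ via the same telescoping identity you use, so the cofactor is exhibited explicitly as $E(y)/\varphi^{ak}$ (a closed form the paper reuses later, e.g.\ in the proof of Theorem \ref{thm2}); and for ``$h\mid f \Rightarrow E\equiv 0$'' it compares the two factorizations $f=hg$ and $f=h\cdot(E/\varphi^{ak})$ and cancels $h$, which is invertible in the fraction ring because it is a non-zerodivisor by Lemma \ref{lemma-unique-factor}. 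You avoid the fraction ring altogether: your forward direction is a one-line congruence computation from $hQ\equiv\varphi^{ak}\bmod p^k$, and in the converse you write out the ideal membership $E=p^kA+\varphi^{ak}B$, substitute $\varphi^{ak}\equiv hQ$, and cancel the cofactor sum $Q$ instead of $h$, using that $Q\equiv\varphi^{a(k-1)}\bmod p$ is a non-zerodivisor by the same Lemma \ref{lemma-unique-factor}. So the two proofs are near mirror images (each uses the telescoping identity in one direction and the zero-divisor lemma in the other); yours is slightly more elementary and self-contained, while the paper's identifies the cofactor explicitly, which is the version needed downstream. Two cosmetic points: your identity $hQ=\varphi^{ak}-(py)^k$ should really be a congruence mod $p^k$, since $h$ is only specified modulo $p^k$; and the fact that $\varphi\not\equiv 0\bmod p$ comes from $\varphi$ being irreducible mod $p$ in the surrounding setup (equivalently from $f\equiv\varphi^e\bmod p$ with $f\not\equiv 0\bmod p$), exactly as you note.
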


\begin{proof}
Let $Q$ denote the {\em ring of fractions} of the ring $(\Z/\langle p^k\rangle)[x]$. Since $\varphi$ is not a zero divisor, $(E(y)/\varphi^{ak}) \in Q$.

We first prove the reverse direction. If $E(y)\equiv 0 \bmod \langle p^k,\varphi^{ak}\rangle$, then $(E(y)/\varphi^{ak})$ is a valid polynomial over $(\Z/\langle p^k\rangle)[x]$. 
Multiplying $h$ with $(E(y)/\varphi^{ak}) \bmod p^k$, we write,
\[(\varphi^a-py) ( (f/\varphi^{ak}) \Sigma_{i=0}^{k-1}\varphi^{a(k-1-i)}(py)^i )\ \equiv\ (f/\varphi^{ak}) ( \varphi^{ak}-(py)^k ) \equiv f\cdot \varphi^{ak}/\varphi^{ak} \equiv f \bmod p^k.\]
Hence, $h$ divides $f$ modulo $p^k$.

\smallskip
For the forward direction, assume that there exists some $g(x) \in (\Z/\langle p^k\rangle)[x]$, such that, $f(x)\equiv h(x) g(x) \bmod p^k$. We get two factorizations of $f$ in $Q$,
\[ f(x) = h(x) g(x) \text{~~~  and  ~~~} f(x) = h(x) (E(y)/\varphi^{ak}) .\]

Subtracting the first equation from the second one,
\[ h(x) \left( g(x) - (E(y)/\varphi^{ak}) \right) = 0 .\]

Notice that $h(x)$ is not a zero divisor in $(\Z/\langle p^k \rangle) [x]$ (by Lemma
\ref{lemma-unique-factor}) and is thus invertible in $Q$. 
	So, $E(y)/\varphi^{ak} =g(x)$ in $Q$. Since $g(x)$ is in $(\Z/\langle p^k \rangle) [x]$, we deduce the equivalent divisibility statement: $E(y) \equiv 0 \bmod \langle p^k,\varphi^{ak} \rangle$.
\end{proof}	
%


The following two observations simplify our task of finding roots $y$ of polynomial $E(y)$.
\begin{itemize}
	\item  First, due to symmetry, it is enough to find factors $h \equiv \varphi^a \bmod p$ with $a\leq e/2$. 
	       The assertion follows because $f\equiv h g \bmod p^k$ implies, at least one of the factor (say $h$) must be of the form $\varphi^a \bmod p$ for $a\leq e/2$. 
	       By Lemma \ref{lemma-unique-factor}, for a fixed $h \equiv (\varphi^a -p y) \bmod p^k$, there is a unique $g\equiv (\varphi^{e-a}-p y') \bmod p^k$ such that $f\equiv h g \bmod p^k$. 
	       So, to find $g$, it is enough to find $h$.
       \item  Second, observe that any root $y \in R$ (of $E(y)\in R[y]$) can be seen as $y= y_0 + p y_1 + p^2 y_2 + \ldots + p^{k-1} y_{k-1}$, where each $y_i \in R_0$ for all $i$ in $\{0,\ldots,k-1\}$. 
	       The following lemma decreases the required precision of root $y$.
\end{itemize}
\begin{lemma}
\label{lemma-1}   
Let $y= y_0 + p y_1 + p^2 y_2 + \ldots + p^{k-1} y_{k-1}$ be a root of $E(y)$, where $k \geq 2$ and $a\leq e/2$.  
Then, all elements of set $y= y_0 + p y_1 + p^2 y_2 + \ldots + p^{k-3} y_{k-3} + p^{k-2} *$  are also roots of $E(y)$.
\end{lemma}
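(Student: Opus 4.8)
The plan is a direct substitute-and-expand argument. Write the given root as $y = v + p^{k-2}w^{*}$ where $v := y_0 + py_1 + \cdots + p^{k-3}y_{k-3}$ is the low-order part we keep fixed and $w^{*} := y_{k-2} + py_{k-1}$. Every element of the set $y_0 + py_1 + \cdots + p^{k-3}y_{k-3} + p^{k-2}*$ then has the form $v + p^{k-2}w$ for some $w \in R$, so it suffices to establish the single congruence
\[ E(v + p^{k-2}w) \ \equiv\ E(v) \pmod{\langle p^k, \varphi^{ak}\rangle} \qquad \text{for all } w \in R . \]
Granting this, apply it with $w = w^{*}$ to get $E(v) \equiv E(y) \equiv 0$, and then apply it again with arbitrary $w$ to conclude $E(v + p^{k-2}w) \equiv E(v) \equiv 0$.

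To prove the congruence, substitute $p(v + p^{k-2}w) = pv + p^{k-1}w$ into the definition of $E$ and expand each $(pv + p^{k-1}w)^i$ by the binomial theorem. The terms with $j = 0$ (the index in the binomial sum) reassemble into $E(v)$, so the task reduces to showing that each remaining term
\[ T_{i,j} \ :=\ f \cdot \varphi^{a(k-1-i)}\binom{i}{j}(pv)^{i-j}(p^{k-1}w)^j, \qquad 1 \le j \le i \le k-1, \]
vanishes modulo $\langle p^k, \varphi^{ak}\rangle$. The power of $p$ appearing in $(pv)^{i-j}(p^{k-1}w)^j$ is at least $(i-j) + (k-1)j = i + (k-2)j$, and a one-line case check disposes of most terms by valuation alone: if $j \ge 2$ then $i + (k-2)j \ge (k-1)j \ge 2(k-1) \ge k$, and if $j = 1$ with $i \ge 2$ then $i + (k-2) \ge k$; in either case $p^k \mid T_{i,j}$.

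The only term of $p$-valuation smaller than $k$ is $T_{1,1} = f\,\varphi^{a(k-2)}\,p^{k-1}w$, and this is exactly where the hypothesis $a \le e/2$ enters. Writing $f \equiv \varphi^{e} + pl$, we have $T_{1,1} \equiv \varphi^{e + a(k-2)}\,p^{k-1}w \pmod{p^k}$, and $e + a(k-2) \ge 2a + a(k-2) = ak$, so $\varphi^{ak}$ divides $T_{1,1}$ modulo $p^k$; hence $T_{1,1} \equiv 0 \pmod{\langle p^k, \varphi^{ak}\rangle}$. This proves the displayed congruence and the lemma. I expect the only nonroutine point to be noticing that it is the factored form $f = \varphi^{e} + pl$ (not the bare polynomial $E$) together with the symmetry reduction $a \le e/2$ that rescues the single borderline term $T_{1,1}$; everything else is bookkeeping on $p$-adic valuations.
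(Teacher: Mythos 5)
Your proof is correct and follows essentially the same route as the paper: both arguments observe that perturbing the two top $p$-adic coordinates of $y$ only changes $E$ by terms divisible by $p^k$, except for the single borderline term $f\varphi^{a(k-2)}p^{k-1}w$, which is then killed via $f\equiv\varphi^e \bmod p$ and $e\ge 2a$ so that $\varphi^{ak}$ divides it modulo $\langle p^k,\varphi^{ak}\rangle$. Your binomial-expansion bookkeeping is just a more explicit version of the paper's valuation observation, so there is nothing substantively different to flag.
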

\begin{proof}
Notice that the variable $y$ is multiplied with $p$ in $E(y)$, implying $y_{k-1}$ is irrelevant. 
Similar argument is applicable for the variable $y_{k-2}$ in any term of the form $(py)^i$ for $i\geq 2$. The only remaining term containing $y_{k-2}$ is $f \varphi^{a(k-2)} (py)$. 
The coefficient of $y_{k-2}$ in this term is $\varphi^{a(k-2)} f p^{k-1}$.
This coefficient vanishes modulo $\langle p^k,\varphi^{ak}\rangle$ too, because 
 
$\varphi^{a(k-2)} f \,\equiv\, \varphi^{a(k-2)} \varphi^e \,\equiv\, \varphi^{ak}\varphi^{e-2a} \,\equiv\, 0 \bmod \langle p,\varphi^{ak}\rangle \ .$ 
\end{proof}

{\bf Root-finding modulo a principal ideal.}
Finally, we state a slightly modified version of the theorem from \cite[Cor.24]{berthomieu2013polynomial}, showing that all the roots of a polynomial $g(y)\in R_0[y]$ can be efficiently described. They gave their algorithm to find (all) roots in $\Z/\langle p^n\rangle$; we modify it in a straightforward way to find (all) roots in $\F_p[x]/\langle\varphi^{ak}\rangle = R_0$ (Appendix \ref{appendix-rootfind}).
Any root in $R_0$ can be written as $y = y_0+\varphi y_1 + \cdots + \varphi^{ak-1} y_{ak-1}$, where each $y_j$ is in the field $R_0/\langle \varphi \rangle$.

Let $g(y)$ be a polynomial in $R[y]$, then a set $y=y_0+\varphi y_1 + \ldots + \varphi^i y_i + \varphi^{i+1} *$ will be called a \emph{representative root} of $g$ iff
\begin{itemize}
	\item All elements in $y=y_0+\varphi y_1 + \ldots + \varphi^i y_i + \varphi^{i+1} *$ are roots of $g$.
	\item Not all elements in $y'=y_0+\varphi y_1 + \ldots + \varphi^{i-1} y_{i-1} + \varphi^i *$ are roots of $g$.
\end{itemize}

We will sometimes represent the set of roots, $y=y_0+\varphi y_1 + \ldots + \varphi^i y_i + \varphi^{i+1} *$, succinctly as $y = v + \varphi^{i+1} *$, 
where $v\in R$ stands for $y=y_0+\varphi y_1 + \ldots + \varphi^i y_i$. Such a pair, $(v,i+1)$, will be called a \emph{representative pair}.


\begin{theorem}{\cite[Cor.24]{berthomieu2013polynomial}}
\label{thm-random-root-find}
Given a bivariate $g(y) \in R_0[y]$ where $R_0 = \Z[x]/\langle p, \varphi^{ak}\rangle$, let $Z \subseteq R_0$ be the root set of $g(y)$. Then $Z$ can be expressed as the disjoint union of at most $\deg_y(g)$ many representative pairs $(a_0,i_0)$ ($a_0 \in R_0$ and $i_0 \in \N$).

These representative pairs can be found in randomized poly($\deg_y(g), \log p, ak\deg\varphi$) time.
\end{theorem}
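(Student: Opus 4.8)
The plan is to transfer, essentially unchanged, the root-finding algorithm of \cite[Cor.24]{berthomieu2013polynomial} for $\Z/\langle p^n\rangle$ to the ring $R_0=\F_p[x]/\langle\varphi^{ak}\rangle$. The point is that $R_0$ is a finite \emph{chain ring}: as $\varphi$ is monic and irreducible modulo $p$, $R_0$ is local with maximal ideal $\langle\varphi\rangle$, its ideals form the single chain $R_0=\langle\varphi^0\rangle\supsetneq\langle\varphi\rangle\supsetneq\cdots\supsetneq\langle\varphi^{ak}\rangle=0$, and its residue field is $K:=\F_p[x]/\langle\varphi\rangle\cong\F_{p^{\deg\varphi}}$. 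Thus $R_0$ plays exactly the role of $\Z/\langle p^n\rangle$ with uniformizer $p$ replaced by $\varphi$, residue field $\F_p$ replaced by $K$, and length $n$ replaced by $ak$; in particular every element has a unique expansion $\sum_{i<ak}\varphi^i y_i$ with $y_i$ ranging over a fixed set of coset representatives of $K$. The algorithm of \cite{berthomieu2013polynomial} only uses ring arithmetic, this $\varphi$-adic expansion, the substitution $y\mapsto a_0+\varphi z$ followed by dividing out the largest power of $\varphi$ common to all coefficients, and univariate root finding over the residue field. Each of these is available for $R_0$: arithmetic costs poly($ak\deg\varphi,\log p$) per operation, and root finding over the finite field $K$ is randomized poly($\deg_y g,\deg\varphi,\log p$) by Cantor--Zassenhaus/Berlekamp. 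Hence the algorithm and its correctness carry over; only the cost accounting changes, picking up the factor $ak\deg\varphi$ for the bit-size of elements of $R_0$, which matches the stated running time.

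For completeness I would spell out the recursion \textsc{Root-Find}$(g,n)$, returning the roots of $g$ in $\F_p[x]/\langle\varphi^n\rangle$. If every coefficient of $g$ already vanishes modulo $\varphi^n$, return the single pair representing the whole ring. If $g\equiv0\bmod\varphi$ but not mod $\varphi^n$, write $g=\varphi g_1$ with $g_1$ over $\F_p[x]/\langle\varphi^{n-1}\rangle$ (well defined by Claim~\ref{claim-1}); then $y$ is a root of $g$ iff its reduction mod $\varphi^{n-1}$ is a root of $g_1$, so return the (lifted) pairs of \textsc{Root-Find}$(g_1,n-1)$. Otherwise $\bar g:=g\bmod\varphi\in K[y]$ is nonzero of degree $\le\deg_y g$; compute its roots $a_0^{(1)},\dots,a_0^{(r)}\in K$, and for each $a_0:=a_0^{(i)}$ substitute $y=a_0+\varphi z$. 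Every coefficient of $G(z):=g(a_0+\varphi z)$ then lies in $\langle\varphi\rangle$ (for $j\ge1$ the $z^j$-coefficient carries a factor $\varphi^j$, and $G(0)=g(a_0)\equiv0\bmod\varphi$ since $a_0$ is a residue root); let $\varphi^m$, $m\ge1$, be the exact gcd power of $\varphi$ among these coefficients, put $G_1:=G/\varphi^m$ over $\F_p[x]/\langle\varphi^{n-m}\rangle$, compute its roots recursively as pairs $(w,j)$, and output $(a_0+\varphi w,\,j+1)$ for $g$. Termination is immediate: the exponent strictly drops in every recursive call. Correctness follows by induction on $n$, since $y\mapsto(a_0,z)$, where $z$ is the $\varphi$-truncation of $(y-a_0)/\varphi$, is a bijection converting the roots of $g$ mod $\varphi^n$ congruent to $a_0$ mod $\varphi$ into the roots of $G_1$ mod $\varphi^{n-m}$.

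The one step that is not a mechanical restatement of \cite{berthomieu2013polynomial} is the bound of $\deg_y g$ on the number of representative pairs, which I would prove by an induction that tracks multiplicities of residue-field roots. If $a_0$ is a root of $\bar g$ of multiplicity $\mu$, then writing $g=\sum_j c_j(y-a_0)^j$ (a finite-difference expansion, valid in characteristic $p$) we have $v_\varphi(c_j)\ge1$ for $j<\mu$ and $v_\varphi(c_\mu)=0$, where $v_\varphi$ is the $\varphi$-adic valuation; hence in $G(z)=\sum_j c_j\varphi^j z^j$ the $z^j$-coefficient has valuation $v_\varphi(c_j)+j$, which is $>\mu$ for $j>\mu$ and attains the minimum $m$ only at some $j\le\mu$. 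Therefore $G_1=G/\varphi^m$ reduces modulo $\varphi$ to a polynomial of degree $\le\mu$, so by the induction hypothesis the branch of $a_0$ contributes at most $\mu$ pairs; summing over the distinct residue roots with multiplicities $\mu_1,\mu_2,\dots$ and using $\sum_i\mu_i\le\deg\bar g\le\deg_y g$ gives the claim (the ``$g\equiv0\bmod\varphi$'' branch only lowers $n$, not the $y$-degree, and the base case $n=1$ is root counting over the field $K$). Disjointness of the output pairs is clear, as distinct residue roots give cosets differing in the leading $\varphi$-digit and the recursion preserves disjointness within a branch. I expect this multiplicity bookkeeping to be the only real obstacle; everything else is routine once the chain-ring structure of $R_0$ is recorded.
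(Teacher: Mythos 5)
Your proposal is correct and follows essentially the same route as the paper: the paper likewise transfers the algorithm of Berthomieu et al.\ verbatim to the chain ring $\F_p[x]/\langle\varphi^{ak}\rangle$ (Algorithm \ref{algo2} in Appendix \ref{appendix-rootfind}), using Cantor--Zassenhaus over the residue field and the same recursion with $\varphi$-power stripping. Your multiplicity bookkeeping for the $\deg_y(g)$ bound is exactly the argument the paper compresses into its closing remark that the number of representative roots of $g_a$ is bounded by the multiplicity of $a$ in $\tilde g$; you merely spell it out in more detail.
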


For completeness, Algorithm \Call{Root-Find}{$g,R_0$} is given in Appendix \ref{appendix-rootfind}.

\smallskip
We will fix $k=4$ for the rest of this section. Similar techniques (even simpler) work for $k=3$ and $k=2$. 
The issues with this approach for $k>4$ will be discussed in Appendix~\ref{sec-comparison}. 

\subsection{Reduction to root-finding modulo a principal ideal of $\F_p[x]$}
\label{sec-fac-p4}
\vspace{-1mm}

In this subsection, the task to find roots of $E(y)$ modulo the bi-generated ideal $\gen{p^4,\varphi^{4a}}$ of $\Z[x]$ will be reduced to finding roots modulo the principal ideal $\gen{\varphi^{4a}}$ (of $\F_p[x]$).


Let us consider the equation $E(y)\equiv 0 \bmod \gen{p^4,\varphi^{4a}}$. We have,
\begin{equation}
\label{eq1}
	f (\varphi^{3a}+\varphi^{2a} (py)+\varphi^{a} (py)^2+(py)^3)\equiv 0 \bmod \gen{p^4,\varphi^{4a}} \,.
\end{equation}

Using Lemma \ref{lemma-1}, we can assume $y= y_0 + p y_1$,  
\begin{equation}
\label{eq2}
	f (\varphi^{3a}+\varphi^{2a} p(y_0+p y_1)+\varphi^{a} p^2 (y_0^2 + 2 p y_0 y_1)+(p y_0)^3)\equiv 0 \bmod \gen{p^4,\varphi^{4a}} \,.
\end{equation}

The idea is to first solve this equation modulo $\gen{p^3, \varphi^{4a}}$. Since $f \equiv \varphi^{e} \bmod p$, $e \geq 2a$, variable $y_1$ is redundant while solving this equation modulo $p^3$.
Following lemma finds all representative pairs $(a_0,i_0)$ for $y_0$, such that, $E(a_0+\varphi^{i_0}y_0+py_1) \equiv 0 \bmod \gen{p^3, \varphi^{4a}}$ for all $y_0,y_1 \in R$. Alternatively, we can state this in the polynomial ring $R[y_0,y_1]$.
Dividing by $p^3$, we will be left with an equation modulo the principal ideal $\gen{\varphi^{4a}}$ (of $\F_p[x]$).

%

\begin{lemma}[Reduce to char=$p$]
\label{lemma-A}
We efficiently compute a unique set $S_0$ of all representative pairs $(a_0,i_0)$, where $a_0\in R_0$ and $i_0\in \N$, such that, 
\[ E((a_0+\varphi^{i_0} y_0) + p y_1) \ =\ p^3 E'(y_0,y_1) \ \bmod \gen{p^4,\varphi^{4a}} \] 
for a polynomial $E'(y_0,y_1) \in R_0[y_0,y_1]$ (it depends on $(a_0,i_0)$). Moreover, 
\begin{enumerate}
	\item $|S_0|\leq 2$. If our efficient algorithm fails to find $E'$ then Eqn.~\ref{eq2} has no solution.
	\item $E'(y_0,y_1) =: E_1(y_0)+E_2(y_0)y_1$, where $E_1(y_0) \in R_0[y_0]$ is cubic in $y_0$ and $E_2(y_0) \in R_0[y_0]$ is linear in $y_0$.
	\item For every root $y\in R$ of $E(y)$ there exists $(a_0, i_0) \in S_0$ and $(a_1,a_2) \in R \times R$, such that $y=(a_0+\varphi^{i_0} a_1) + p a_2$ and		
	$E'(a_1,a_2)\equiv 0 \bmod \gen{p,\varphi^{4a}}$.
\end{enumerate}
\end{lemma}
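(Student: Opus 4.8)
The plan is to remove powers of $p$ from Equation~\ref{eq2} one at a time; at each stage we either detect that no solution can exist, or we reduce to a congruence with strictly smaller $p$-content, finishing with a single call to Algorithm~\Call{Root-Find}{} in characteristic $p$. The starting observation is that $y$ enters $E$ only through $py$ and $\deg_y E = 3$, so the coefficient of $y_1^2$ in $E(y_0+py_1)$ is a multiple of $p^4$ and that of $y_1^3$ a multiple of $p^6$; both vanish modulo $p^4$, and hence
\[ E(y_0+py_1) \;=\; A(y_0) + B(y_0)\,y_1 \bmod \langle p^4,\varphi^{4a}\rangle, \]
with $A(y_0) = f(\varphi^{3a}+\varphi^{2a}py_0+\varphi^a p^2y_0^2+p^3y_0^3)$ and $B(y_0) = f\varphi^{2a}p^2 + 2f\varphi^a p^3 y_0$. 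The inequality $e\ge 2a$ (from $a\le e/2$) gives the ``free'' vanishings $\varphi^{e+2a}\equiv\varphi^{e+3a}\equiv 0 \bmod \langle p,\varphi^{4a}\rangle$; using these with $f\equiv\varphi^e+pl$, one checks $A(y_0)\equiv 0 \bmod \langle p,\varphi^{4a}\rangle$ and, after dividing once by $p$ (Lemma~\ref{lemma-compute-g-modp}, modulus updated by Claim~\ref{claim-2}), that $A(y_0)/p \equiv l\varphi^{3a} \bmod \langle p,\varphi^{4a}\rangle$, which is independent of $y_0$. Likewise $p^2\mid B(y_0)$ and $B(y_0)/p^2 \equiv f\varphi^{2a}\equiv 0 \bmod \langle p,\varphi^{4a}\rangle$, so in fact $p^3\mid B(y_0)$ for every $y_0$.

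Now the argument splits. If $\varphi^a \nmid l$ then $A(y_0)/p\not\equiv 0$, so $p^2\nmid A(y_0)$ while $p^3\mid B(y_0)\,y_1$; hence $p^2 \nmid E(y_0+py_1)$ for all $y_0,y_1$ and Equation~\ref{eq2} has no solution, which we report. Otherwise $\varphi^a\mid l$, so $A(y_0)/p\equiv 0\bmod \langle p,\varphi^{4a}\rangle$; a second division by $p$ and a reduction mod $p$ produce a polynomial $\bar A(y_0) := A(y_0)/p^2 \bmod \langle p,\varphi^{4a}\rangle \in R_0[y_0]$, whose $y_0^3$-coefficient is $fp\equiv 0$, so $\deg_{y_0}\bar A\le 2$; note $y_1$ has again disappeared, since $B(y_0)/p^2\equiv 0\bmod p$. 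Running \Call{Root-Find}{$\bar A, R_0$} and invoking Theorem~\ref{thm-random-root-find}, the root set of $\bar A$ in $R_0$ is a disjoint union of at most $\deg_{y_0}\bar A\le 2$ representative pairs; we let $S_0$ be exactly this set, so $|S_0|\le 2$. If $S_0=\emptyset$ there is no $y_0$ with $p^3\mid A(y_0)$, hence none with $p^3\mid E$, so Equation~\ref{eq2} has no solution. This proves item~1.

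Fix $(a_0,i_0)\in S_0$ and put $w := a_0+\varphi^{i_0}y_0$ in place of $y_0$. By the defining property of the representative pair, $\bar A(w)\equiv 0$, i.e.\ $p^3\mid A(w)$; with $p^3\mid B(w)$ this yields $p^3\mid E(w+py_1)$, and three applications of Lemma~\ref{lemma-compute-g-modp} (moduli tracked via Claim~\ref{claim-2}) let us set, efficiently, $E'(y_0,y_1) := E(w+py_1)/p^3 \bmod \langle p,\varphi^{4a}\rangle$. Decomposing along the affine-in-$y_1$ structure, $E'(y_0,y_1) = E_1(y_0) + E_2(y_0)y_1$ with $E_1 := A(w)/p^3\bmod\langle p,\varphi^{4a}\rangle$ and $E_2 := B(w)/p^3\bmod\langle p,\varphi^{4a}\rangle$. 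Here $B(w)/p^3 \equiv l\varphi^{2a} + 2f\varphi^a w \bmod\langle p,\varphi^{4a}\rangle$ is linear in $w$, hence in $y_0$, so $E_2$ is linear; and the $w^3$-term of $A(w)/p^3$ equals $fw^3\equiv \varphi^e w^3$, so $E_1$ has degree $\le 3$ in $w$, hence in $y_0$, i.e.\ $E_1$ is cubic. This is item~2. For item~3, let $y\in R$ be any root of $E$; by Lemma~\ref{lemma-1} we may assume $y = y_0+py_1$, and then $E(y)\equiv 0$ forces $p^3\mid A(y_0)$, i.e.\ $\bar A(y_0)\equiv 0$, so $y_0$ lies in some representative pair, $y_0 = a_0+\varphi^{i_0}a_1$ with $(a_0,i_0)\in S_0$. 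Setting $a_2 := y_1$ gives $y = (a_0+\varphi^{i_0}a_1)+pa_2$, and since $\langle p^4,\varphi^{4a}\rangle:\langle p^3\rangle = \langle p,\varphi^{4a}\rangle$ (Claim~\ref{claim-2}), $E(y)\equiv 0$ yields $E'(a_1,a_2) = E(y)/p^3 \equiv 0 \bmod \langle p,\varphi^{4a}\rangle$.

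The main obstacle is the valuation bookkeeping in the two middle paragraphs: one must follow the exact $p$-adic valuation of every monomial of $E$ through three successive divisions by $p$, and it is precisely $e\ge 2a$ together with the single test $\varphi^a\mid l$ that makes each division legal while simultaneously killing the dependence on $y_1$ at every intermediate level --- this is what guarantees that \Call{Root-Find}{} is only ever called on a (quadratic) univariate in $y_0$ and so produces at most two branches. Pinning down the residues --- and thereby confirming $\deg_{y_0}E_1\le 3$ and $\deg_{y_0}E_2\le 1$ --- is then routine.
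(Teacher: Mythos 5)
Your proposal is correct and follows essentially the same route as the paper: derive the necessary condition $\varphi^a\mid l$ (the paper's Eqn.~\ref{eq-div-p}) by reducing modulo $p^2$, observe that $y_1$ drops out modulo $p^3$, divide by $p^2$ and solve a quadratic univariate in $y_0$ via \Call{Root-Find}{} to get $|S_0|\le 2$, then substitute $a_0+\varphi^{i_0}y_0+py_1$ and divide by $p^3$ to read off $E'=E_1+E_2y_1$ with cubic $E_1$ and linear $E_2$. The only cosmetic difference is that you keep the quadratic $\bar A$ modulo $\varphi^{4a}$ (with its $\varphi^{3a}$ content intact) instead of dividing by $p^2\varphi^{3a}$ and working modulo $\varphi^a$ as the paper does; the two are equivalent.
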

\noindent We think of $E'$ as the quotient $E((a_0+\varphi^{i_0} y_0) + p y_1)/p^3$ in the polynomial ring $R_0[y_0,y_1]$; and would work with it instead of $E$ in the root-finding algorithm.
\begin{proof}
Looking at Eqn.~\ref{eq2} modulo $p^2$,
\[  f \varphi^{2a} (\varphi^{a} + p y_0) \ \equiv\ 0\ \bmod \gen{p^2,\varphi^{4a}} . \]
Substituting $f = \varphi^{e} + p h_1$, we get $(\varphi^e + ph_1) (\varphi^{3a} + \varphi^{2a} p y_0) \ \equiv\ 0 \bmod \gen{p^2, \varphi^{4a}}$. Implying, $p h_1 \varphi^{3a} \ \equiv\ 0 \bmod \gen{p^2, \varphi^{4a}}$.
Using Claim~\ref{claim-2} the above equation implies that,
\begin{equation}
\label{eq-div-p}
h_1 \,\equiv\, 0 \,\bmod \gen{p,\varphi^{a}} \,,
\end{equation}
is a necessary condition for $y_0$ to exist.

We again look at Eqn.~\ref{eq2}, but modulo $p^3$ now: 
$f (\varphi^{3a} + \varphi^{2a} p y_0 + \varphi^{a} p^2 y_0^2) \,\equiv\, 0\, \bmod \gen{p^3,\varphi^{4a}}$.

Notice that $y_1$ is not present because its coefficient: $p^2 f \varphi^{2a} \ \equiv\ 0 \bmod \gen{p^3, \varphi^{4a}}$. Substituting $f = \varphi^{e} + p h_1$, we get,
\[ (\varphi^e + ph_1) (\varphi^{3a} + \varphi^{2a} p y_0 + \varphi^a p^2 y_o^2) \ \equiv\ 0 \bmod \gen{p^3, \varphi^{4a}} .\]
Removing the coefficients of $y_0$ which vanish modulo $\gen{p^3, \varphi^{4a}}$,
\[ \varphi^{e+a} p^2 y_0^2 + \varphi^{3a} p h_1 + \varphi^{2a} p^2 h_1 y_0   \ \equiv\ 0\ \bmod \gen{p^3, \varphi^{4a}}	.\]
From Eqn.~\ref{eq-div-p}, $h_1$ can be written as $p h_{1,1}+\varphi^a h_{1,2}$, so
\[  p^2 \left( \varphi^{e+a} y_0^2 + \varphi^{3a} h_{1,2} y_0 + \varphi^{3a} h_{1,1} \right) \,\equiv\, 0\, \bmod \gen{p^3, \varphi^{4a}} .\]

We can divide by $p^2\varphi^{3a}$ using Claim~\ref{claim-2} to get an equation modulo $\varphi^{a}$ in the ring $\F_p[x]$. 
This is a quadratic equation in $y_0$. Using Theorem \ref{thm-random-root-find}, we find the solution set $S_0$ with at most two representative pairs: for $(a_0,i_0)\in S_0$, every $y\in a_0+\varphi^{i_0}* + p*$ satisfies,
\[ E(y) \ \equiv\ 0\ \bmod \gen{p^3,\varphi^{4a}}\ .  \]

In other words, on substituting $(a_0 + \varphi^{i_0} y_0 + py_1)$ in $E(y)$,
\[ E(a_0+\varphi^{i_0} y_0 + py_1) \ \equiv\ p^3 E'(y_0,y_1) \ \bmod \gen{p^4,\varphi^{4a}},  \]
for a ``bivariate'' polynomial $E'(y_0,y_1) \in R_0[y_0,y_1]$. This sets up the correspondence between the roots of $E$ and $E'$.

Substituting $(a_0 + \varphi^{i_0} y_0 + py_1)$ in Eqn.~\ref{eq2}, we notice that $E'(y_0,y_1)$ has the form $E_1(y_0) + E_2(y_0) y_1$ for a linear $E_2$ and a cubic $E_1$. 

Finally, this reduction is constructive, because of Lemma \ref{lemma-compute-g-modp} and Theorem \ref{thm-random-root-find}, giving a randomized poly-time algorithm. 
\end{proof}

\subsection{Finding roots of a \emph{special} bi-variate $E'(y_0,y_1)$ modulo $\gen{p, \varphi^{4a}}$}
\label{sec-root-E'}

The final obstacle is to find roots of $E'(y_0,y_1)$ modulo $\gen{\varphi^{4a}}$ in $\F_p[x]$. 
The polynomial $E'(y_0,y_1)= E_1(y_0) + E_2(y_0) y_1$ is \emph{special} because $E_2 \in R_0[y_0]$ is {\em linear} in $y_0$. 


For a polynomial $u\in\F_p[x][\mathbf y]$ we define {\em valuation} $\val_{\varphi}(u)$ to be the largest $r$ such that $\varphi^r | u$.
Our strategy is to go over all possible valuations $0\le r\le 4a$ and find $y_0$, such that,
\begin{itemize}
\item $E_1(y_0)$ has valuation at least $r$. 
\item $E_2(y_0)$ has valuation exactly $r$.
\end{itemize}

From these $y_0$'s, $y_1$ can be obtained by `dividing' $E_1(y_0)$ with $E_2(y_0)$. The lemma below shows that this strategy captures all the solutions. 


\begin{lemma}[Bivariate solution]
\label{lemma-2}
A pair $(u_0,u_1) \in R_0 \times R_0$ satisfies an equation of the form  $E_1(y_0)+E_2(y_0) y_1 \equiv 0 \bmod \gen{p, \varphi^{4a}}$ if and only if $\val_{\varphi}(E_1(u_0)) \geq \val_{\varphi}(E_2(u_0))$.
\end{lemma}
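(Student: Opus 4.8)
The plan is to set $r := \val_\varphi(E_2(u_0))$ and $s := \val_\varphi(E_1(u_0))$ --- where the valuation of an element of $R_0 = \F_p[x]/\langle\varphi^{4a}\rangle$ is read as the largest power of $\varphi$ dividing a lift of it to $\F_p[x]$, capped at $4a$ (so the zero element has valuation $4a$) --- and then to prove the two implications separately, working throughout with lifts to $\F_p[x]$ and using the one structural fact that $\varphi$ is irreducible over $\F_p$, hence any element of $\F_p[x]/\langle\varphi^m\rangle$ coprime to $\varphi$ is a unit. In the ``$\Leftarrow$'' direction I would actually exhibit a witnessing $u_1$ (which then turns out to be pinned down modulo $\varphi^{4a-r}$ and free above that power), since that is the form in which the lemma gets used later to read $y_1$ off from $y_0$.

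For ``$\Rightarrow$'': assuming $(u_0,u_1)$ solves the equation, after lifting one has $E_1(u_0) = \varphi^{4a}c - E_2(u_0)\,u_1$ for some $c\in\F_p[x]$. Since $r\le 4a$, both $\varphi^{4a}c$ and $E_2(u_0)\,u_1$ are divisible by $\varphi^{r}$, hence so is $E_1(u_0)$; that is, $s\ge r$. (Note this uses nothing about $E_2$ being linear.)

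For ``$\Leftarrow$'': assume $s\ge r$. If $r=4a$ then $E_2(u_0)\equiv 0$, and $s\ge 4a$ forces $E_1(u_0)\equiv 0$ too, so any $u_1$ (e.g.\ $u_1=0$) works. If $r<4a$, then $\varphi^{r}$ divides both $E_2(u_0)$ (by definition of $r$) and $E_1(u_0)$ (as $s\ge r$), so the quotients $\overline{E_2}:=E_2(u_0)/\varphi^{r}$ and $\overline{E_1}:=E_1(u_0)/\varphi^{r}$ are well defined in $\F_p[x]/\langle\varphi^{4a-r}\rangle$, with $\varphi\nmid\overline{E_2}$; hence $\overline{E_2}$ is a unit there. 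I would then take $u_1\in R_0$ to be any lift of $-\overline{E_1}\,\overline{E_2}^{\,-1}\bmod\varphi^{4a-r}$, so that $\overline{E_2}\,u_1+\overline{E_1}\equiv 0\bmod\varphi^{4a-r}$; multiplying through by $\varphi^{r}$ --- legitimate by the cancellation law (Claim~\ref{claim-1}) together with the identity $\langle\varphi^{4a}\rangle:\langle\varphi^{r}\rangle=\langle\varphi^{4a-r}\rangle$ over $\F_p[x]$ (the analogue of Claim~\ref{claim-2}) --- then gives $E_2(u_0)\,u_1+E_1(u_0)\equiv 0\bmod\varphi^{4a}$, i.e.\ $(u_0,u_1)$ satisfies the equation modulo $\langle p,\varphi^{4a}\rangle$.

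The only delicate point, and the step I would check most carefully, is the interplay between the $\varphi$-adic valuation and the fixed modulus $\varphi^{4a}$: one must make sure that ``dividing by $\varphi^{r}$'' is precisely the quotient-ideal cancellation above, and one must separately dispose of the degenerate case $E_2(u_0)\equiv 0$ (i.e.\ $r=4a$), where there is nothing to invert. It is worth remarking that linearity of $E_2$ is not needed for this lemma; it enters only afterwards, when one filters for those $y_0$ whose $E_2$-value has valuation \emph{exactly} $r$.
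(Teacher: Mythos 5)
Your proof is correct and follows essentially the same route as the paper: in the ``$\Leftarrow$'' direction you exhibit the same witness $u_1 \equiv -\bigl(E_1(u_0)/\varphi^r\bigr)\bigl(E_2(u_0)/\varphi^r\bigr)^{-1} \bmod \gen{p,\varphi^{4a-r}}$ (with the $r=4a$ case handled separately, as in the paper's ``$u_1=*$'' note), and your ``$\Rightarrow$'' direction is just the direct form of the paper's contrapositive valuation argument. Your added care about lifting, the cap on the valuation, and the quotient-ideal justification for multiplying back by $\varphi^r$ only makes explicit what the paper leaves implicit.
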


\begin{proof}
Let $r$ be $\val_{\varphi}(E_2(u_0))$, where $r$ is in the set $\{0,1,\ldots,4a\}$. 
If $\val_{\varphi}(E_1(u_0)) \geq \val_{\varphi}(E_2(u_0))$ then set $u_1 \equiv - (E_1(u_0)/\varphi^r)/ (E_2(u_0)/\varphi^r) \bmod \gen{p,\varphi^{4a-r}}$. 
The pair $(u_0,u_1)$ satisfies the required equation. (Note: If $r=4a$ then we take $u_1=*$.)

Conversely, if $r':=\val_{\varphi}(E_1(u_0)) < \val_{\varphi}(E_2(u_0))\le  4a$ then, 
for every $u_1$,

$\val_{\varphi}(E_1(u_0)+E_2(u_0)u_1) = r' \,\,\Rightarrow\,\, E_1(u_0)+E_2(u_0)u_1 \,\not\equiv\, 0\, \bmod \gen{p,\varphi^{4a}} \,. $
\end{proof}


We can efficiently find all representative pairs for $y_0$, at most three, such that $E_1(y_0)$ has valuation at least $r$ (using Theorem \ref{thm-random-root-find}).  
The next lemma shows that we can efficiently filter all $y_0$'s, from these representative pairs, that give valuation {\em exactly} $r$ for $E_2(y_0)$.

\begin{lemma}[Reduce to a unit $E_2$]
\label{lemma-B}
Given a linear polynomial $E_2(y_0) \in R_0[y_0]$ and an $r \in [4a-1]$, let $(b,i)$ be a representative pair modulo $\gen{p,\varphi^{r}}$, i.e.,
$ E_2(b + \varphi^{i} *) \equiv 0 \bmod \gen{p,\varphi^{r}}$. 
Consider the quotient $E_2'(y_0) := E_2(b + \varphi^{i} y_0)/\varphi^r$. 

If $E_2'(y_0)$ does not vanish identically modulo $\gen{p,\varphi}$, then there exists at most one $\theta \in R_0/\gen{\varphi}$ such that
$E_2'(\theta) \equiv 0 \bmod \gen{p, \varphi}$, and this $\theta$ can be efficiently computed.
\end{lemma}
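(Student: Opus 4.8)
The plan is to write $E_2(y_0) = c_0 + c_1 y_0$ with $c_0, c_1 \in R_0$ and then track what the substitution $y_0 \mapsto b + \varphi^i y_0$ followed by division by $\varphi^r$ does to the two coefficients. After substituting we get $E_2(b + \varphi^i y_0) = (c_0 + c_1 b) + (c_1 \varphi^i) y_0$; the representative-pair hypothesis $E_2(b + \varphi^i *) \equiv 0 \bmod \gen{p, \varphi^r}$ forces $\varphi^r$ to divide the constant term $c_0 + c_1 b$, and it also forces $\varphi^r \mid c_1 \varphi^i$ once we remember that ``$*$'' ranges over all of $R_0/\gen{\varphi}$ (taking the element of $*$ that contributes and the one that does not, and subtracting, shows $\varphi^{r} \mid c_1\varphi^i$ as well — alternatively this is exactly what it means for the whole coset to vanish). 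Hence both coefficients of $E_2(b + \varphi^i y_0)$ are divisible by $\varphi^r$, so $E_2'(y_0) := E_2(b + \varphi^i y_0)/\varphi^r$ is a genuine linear polynomial over $R_0$, say $E_2'(y_0) = c_0' + c_1' y_0$.

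Next I would reduce everything modulo $\gen{p, \varphi}$, i.e.\ work over the field $\mathbb{F} := R_0/\gen{\varphi} = \F_p[x]/\gen{\varphi}$. The hypothesis is that $E_2'$ does not vanish identically mod $\gen{p,\varphi}$, so at least one of $\bar c_0', \bar c_1' \in \mathbb{F}$ is nonzero. If $\bar c_1' \neq 0$ then $E_2'(\theta) \equiv 0 \bmod \gen{p,\varphi}$ has the unique solution $\theta \equiv -\bar c_0'/\bar c_1'$ in $\mathbb{F}$, which is computable by one field inversion. If $\bar c_1' = 0$ then $\bar c_0' \neq 0$, so $E_2'(\theta) \equiv \bar c_0' \not\equiv 0$ for every $\theta$, and there is no such $\theta$; in particular there is \emph{at most one} $\theta$ in every case. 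All the arithmetic — reduction of $E_2$ mod $\gen{p,\varphi^r}$, the substitution, the division by $\varphi^r$ (Lemma~\ref{lemma-compute-g-modp} style, or just coefficient-wise since $E_2'$ is already known to be integral), and the inversion in $\mathbb{F}$ — is polynomial-time in $\log p$, $a$, and $\deg\varphi$, which gives the efficiency claim.

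The only genuinely delicate point, and the one I would state carefully, is the claim that the representative-pair hypothesis $E_2(b + \varphi^i *) \equiv 0 \bmod \gen{p, \varphi^r}$ really does make \emph{both} coefficients of the linear polynomial $E_2(b + \varphi^i y_0)$ divisible by $\varphi^r$ — a priori vanishing of $E_2$ on a coset need not imply vanishing of each coefficient for a general ring, but here the ``$*$'' set $\varphi^i y_0$ with $y_0$ ranging over $R_0/\gen{\varphi}$ contains $0$ and contains a unit multiple of $\varphi^i$, and subtracting the two evaluations isolates the coefficient $c_1\varphi^i$; this is exactly the mechanism already used implicitly when Theorem~\ref{thm-random-root-find} peels off representative pairs, so I would invoke it in the same spirit. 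Once that is in hand the rest is routine linear algebra over a field, and the ``at most one $\theta$'' conclusion is immediate. I do not expect any other obstruction; the lemma is essentially the observation that after normalizing a linear form to be primitive with respect to $\varphi$, solving it mod $\gen{p,\varphi}$ is a single step in the field $\mathbb{F}$.
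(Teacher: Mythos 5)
Your proof is correct and follows essentially the same route as the paper: write $E_2(b+\varphi^i y_0) \equiv u + v y_0$, observe both coefficients are divisible by $\varphi^r$, divide, and then solve a linear equation over the field $\F_p[x]/\gen{\varphi}$, where the cases $\bar c_1'\neq 0$ and $\bar c_1'=0,\ \bar c_0'\neq 0$ correspond exactly to the paper's valuation case analysis. The only difference is cosmetic: where the paper dismisses the coefficient-wise divisibility with ``since $y_0$ is formal,'' you justify it by evaluating at two points of the coset and subtracting, which is a fine (slightly more explicit) way to make the same point.
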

\begin{proof}
Suppose $E_2(b + \varphi^{i} y_0) \equiv u + v y_0 \equiv 0 \bmod \gen{p, \varphi^{r}}$. Since $y_0$ is formal, we get $\val_{\varphi}(u)\geq r$ and $\val_{\varphi}(v) \geq r$. 
We consider the three cases (wrt these valuations),
\begin{enumerate}
\item $\val_{\varphi}(u)\ge r$ and $\val_{\varphi}(v)=r$: $E_2'(\theta) \not\equiv 0 \bmod \gen{p, \varphi}$, for all $\theta \in R_0/\gen{\varphi}$ except 
	 $\theta = (-u/\varphi^r)/(v/\varphi^r) \bmod \gen{p,\varphi}$.
\item $\val_{\varphi}(u)=r$ and $\val_{\varphi}(v)>r$: $E_2'(\theta) \not\equiv 0 \bmod \gen{p, \varphi}$, for all $\theta \in R_0/\gen{\varphi}$.
\item $\val_{\varphi}(u)>r$ and $\val_{\varphi}(v)>r$: $E_2'(y_0)$ vanishes identically modulo $\gen{p,\varphi}$, so this case is ruled out by the hypothesis.
\end{enumerate}
There is an efficient algorithm to find $\theta$, if it exists; because the above proof only requires calculating valuations which entails division operations in the ring.
\end{proof}


\subsection{Algorithm to find roots of $E(y)$}

We have all the ingredients to give the algorithm for finding roots of $E(y)$ modulo ideal $\gen{p^4,\varphi^{4a}}$ of $\Z[x]$.


\noindent
\textbf{Input:} A polynomial $E(y) \in R[y]$ defined as $E(y) := f(x) (\varphi^{3a} + \varphi^{2a} (py) +\varphi^a (py)^2 + (py)^3)$. 

\noindent
\textbf{Output:} A set $Z \subseteq R_0$ and a \emph{bad} set $Z' \subseteq R_0$, such that, for each $y_0 \in Z-Z'$, there are (efficiently computable) $y_1 \in R_0$ (Theorem \ref{thm-algo1}) satisfying
$E(y_0+p y_1) \,\equiv\, 0\ \bmod \gen{p^4,\varphi^{4a}}$. These are exactly the roots of $E$.  

Also, both sets $Z$ and $Z'$ can be described by $O(a)$ many representatives (Theorem \ref{thm-algo1}). 
Hence, a $y_0\in Z-Z'$ can be picked efficiently. 

\begin{breakablealgorithm}
\caption{Finding all roots of $E(y)$ in $R$}
\label{algo1}
\begin{algorithmic}[1]
\parState{Given $E(y_0+p y_1)$, using Lemma \ref{lemma-A}, get the set $S_0$ of all representative pairs $(a_0,i_0)$, where $a_0 \in R_0$ and $i_0\in \N$, such that 
$p^3| E((a_0+\varphi^{i_0} y_0)+p y_1) \bmod \gen{p^4,\varphi^{4a}}$.}
\parState{Initialize sets $Z=\{\}$ and $Z'=\{\}$; seen as subsets of $R_0$.}
\For{each $(a_0,i_0) \in S_0$}
\parState{Substitute $y_0 \mapsto a_0+\varphi^{i_0}y_0$, let $E'(y_0,y_1)=E_1(y_0)+E_2(y_0)y_1\bmod \gen{p,\varphi^{4a}}$ be the polynomial obtained from Lemma
\ref{lemma-A}.}
\parState{\textbf{If} $E_2(y_0)\not\equiv 0 \bmod \langle p,\varphi\rangle$ \textbf{then} find (at most one) $\theta \in R_0/\langle \varphi\rangle$ such that $E_2(\theta)\equiv 0 \bmod \langle p,\varphi\rangle$. Update $Z\leftarrow Z\cup (a_0+\varphi^{i_0} *)$ and $Z'\leftarrow Z'\cup (a_0+\varphi^{i_0}(\theta+\varphi *))$.}
\For{each possible valuation $r \in[4a]$}
\parState{Initialize sets $Z_r=\{\}$ and $Z_r^{'}=\{\}$.}
\parState{Call \Call{Root-Find}{$E_1$, $\varphi^r$} to get a set $S_1$ of representative pairs $(a_1,i_1)$ where $a_1 \in R_0$ and $i_1 \in \N$ such that $E_1(a_1+\varphi^{i_1} y_0)\equiv 0 \bmod \langle p,\varphi^{r}\rangle$.}
\For{each $(a_1,i_1) \in S_1$}
\parState{Analogously consider $E_2'(y_0) := E_2(a_1+\varphi^{i_1} y_0) \bmod \langle p,\varphi^{4a}\rangle$.}
\parState{Call \Call{Root-Find}{$E_2'$, $\varphi^r$} to get a representative pair $(a_2,i_2)$ ($\because E_2'$ is linear), where $a_2 \in R_0$ and $i_2 \in \N$ such that $E_2'(a_2+\varphi^{i_2} y_0)\equiv 0 \bmod \langle p,\varphi^{r}\rangle$.}
\If{$r=4a$}
\parState{Update $Z_r \leftarrow Z_r \cup (a_1+\varphi^{i_1}( a_2+\varphi^{i_2}*))$ and $Z_r^{'}\leftarrow Z_r^{'}\cup \{\}$.}
\ElsIf{$E_2'(a_2+\varphi^{i_2} y_0)\not \equiv 0 \bmod \langle p,\varphi^{r+1}\rangle$}
\parState{Get a $\theta \in R_0/\langle \varphi\rangle$ (Lemma \ref{lemma-B}), if it exists, such that $E_2'(a_2+\varphi^{i_2}(\theta+\varphi y_0))\equiv 0 \bmod \langle p,\varphi^{r+1}\rangle$. Update $Z_r^{'} \leftarrow Z_r' \cup (a_1+\varphi^{i_1} (a_2+\varphi^{i_2}(\theta+\varphi *)))$.}
\parState{Update $Z_r \leftarrow Z_r \cup (a_1+\varphi^{i_1}( a_2+\varphi^{i_2}*))$.}
\EndIf
\EndFor
\parState{Update $Z \leftarrow Z \cup (a_0+\varphi^{i_0} Z_r) $ and $Z' \leftarrow Z' \cup (a_0+\varphi^{i_0} Z_r^{'})$.}
\EndFor
\EndFor
\parState{Return $Z$ and $Z'$.}
\end{algorithmic}
\end{breakablealgorithm}

We prove the correctness of Algorithm \ref{algo1} in the following theorem.

\begin{theorem}
\label{thm-algo1}
The output of Algorithm \ref{algo1} (set $Z-Z'$) contains exactly those $y_0 \in R_0$ for which there exist some $y_1\in R_0$, such that, $y=y_0+py_1$ is a root of $E(y)$ in $R$. We can easily compute the set of $y_1$ corresponding to a given $y_0\in Z-Z'$ in poly($\deg f, \log p$) time. 

Thus, we efficiently describe (\& exactly count) the roots $y= y_0+p y_1+p^2 y_2$ in $R$ of $E(y)$, where $y_0,y_1 \in R_0$ are as above and $y_2$ can assume any value from $R$.
\end{theorem}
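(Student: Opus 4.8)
The plan is to verify that Algorithm~\ref{algo1} faithfully implements the chain of reductions established in Lemmas~\ref{lemma-A}, \ref{lemma-2} and \ref{lemma-B}, and then to read off the description and count of all roots $y\in R$ of $E(y)$. First I would fix attention on a single root $y$ of $E(y)$ in $R$. By Lemma~\ref{lemma-1} (applied with $k=4$) we may write $y = y_0 + p y_1$ with $y_0,y_1\in R_0$, the coordinates $y_2,y_3$ being free; this is where the last sentence of the theorem comes from, and it costs nothing beyond citing Lemma~\ref{lemma-1}. So it suffices to characterize the pairs $(y_0,y_1)$ with $E(y_0+py_1)\equiv 0 \bmod \gen{p^4,\varphi^{4a}}$, and in fact (again by redundancy of $y_1$ modulo lower powers of $p$) to characterize the admissible $y_0$, since for each admissible $y_0$ the set of valid $y_1$ is computed explicitly.

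Next I would run the correspondence of Lemma~\ref{lemma-A}: any root $y_0$ lies in $a_0+\varphi^{i_0}*$ for some representative pair $(a_0,i_0)\in S_0$ (with $|S_0|\le 2$), and after the substitution $y_0\mapsto a_0+\varphi^{i_0}y_0$ the equation $E(y_0+py_1)\equiv 0\bmod\gen{p^4,\varphi^{4a}}$ becomes, after dividing by $p^3$, the special bivariate equation $E'(y_0,y_1) = E_1(y_0)+E_2(y_0)y_1\equiv 0 \bmod\gen{p,\varphi^{4a}}$ with $E_1$ cubic and $E_2$ linear. By Lemma~\ref{lemma-2}, a pair $(u_0,u_1)$ solves this iff $\val_\varphi(E_1(u_0))\ge\val_\varphi(E_2(u_0))$, and then $u_1$ is recovered as $-(E_1(u_0)/\varphi^r)/(E_2(u_0)/\varphi^r)\bmod\gen{p,\varphi^{4a-r}}$ with $r=\val_\varphi(E_2(u_0))$. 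So the core task is to enumerate, for each valuation $r\in\{0,\dots,4a\}$, all $u_0$ with $\val_\varphi(E_1(u_0))\ge r$ and $\val_\varphi(E_2(u_0))$ \emph{exactly} $r$. The inner loop of the algorithm does precisely this: \Call{Root-Find}{$E_1,\varphi^r$} returns $O(1)$ representative pairs $(a_1,i_1)$ covering $\{u_0:\val_\varphi(E_1(u_0))\ge r\}$; restricting $E_2$ to each such pair and dividing by $\varphi^r$, Lemma~\ref{lemma-B} identifies the (at most one) residue $\theta$ that must be \emph{excluded} to force $\val_\varphi(E_2)$ to be exactly $r$ rather than $>r$. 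Collecting the surviving sets into $Z$ and the excluded residues into $Z'$, the set $Z-Z'$ (after the outer substitution $y_0\mapsto a_0+\varphi^{i_0}y_0$) is exactly the set of admissible $y_0$ for that branch; the $r=4a$ case is handled separately since then $y_1$ is free. Soundness (every returned $y_0$ is a root) and completeness (every root is returned) both follow by matching each algorithm step to the corresponding iff in Lemmas~\ref{lemma-A}, \ref{lemma-2}, \ref{lemma-B}.

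The bookkeeping obstacle — and the part I expect to take the most care — is the correctness of the set arithmetic with the ``$*$''-notation: one must check that $Z$ and $Z'$ as built are genuine disjoint unions of representative pairs, that $a_0+\varphi^{i_0}Z_r$ and $a_0+\varphi^{i_0}Z_r'$ behave as expected under the scalar operations defined in the Preliminaries, and that subtracting $Z'$ from $Z$ really removes all and only the spurious $y_0$ (those on which $\val_\varphi(E_2)$ jumps above $r$, which would make the recovered $y_1$ ill-defined or the equation unsolvable). For the complexity and counting claims I would tally: $|S_0|\le 2$, the $r$-loop has $O(a)$ iterations, \Call{Root-Find}{$E_1,\cdot$} yields $\le \deg_{y_0}E_1 = 3$ pairs and \Call{Root-Find}{$E_2',\cdot$} yields one pair since $E_2'$ is linear, so $Z$ and $Z'$ are each described by $O(a)$ representatives; since $a\deg\varphi\le\deg f$ and every operation (valuations, divisions, \Call{Root-Find}{} via Theorem~\ref{thm-random-root-find}, and Lemma~\ref{lemma-compute-g-modp} for the quotients) runs in randomized $\mathrm{poly}(\deg f,\log p)$ time, the whole algorithm does too. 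Finally, given $y_0\in Z-Z'$ we compute $r=\val_\varphi(E_2(y_0))$ and output $y_1 = -(E_1(y_0)/\varphi^r)/(E_2(y_0)/\varphi^r)\bmod\gen{p,\varphi^{4a-r}}$ (or $y_1=*$ when $r=4a$) as in Lemma~\ref{lemma-2}; lifting back through $y_0\mapsto a_0+\varphi^{i_0}y_0$ and $y = y_0+py_1+p^2*$ gives the full, compactly described root set of $E(y)$ in $R$, together with its exact size as the sum of the sizes of the $O(a)$ representative sets. \hfill$\Box$
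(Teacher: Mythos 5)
Your proposal is correct and follows essentially the same route as the paper's proof: reduce via Lemma~\ref{lemma-1} and Lemma~\ref{lemma-A} to the special bivariate $E_1(y_0)+E_2(y_0)y_1$, enumerate the valuation $r$ of $E_2$, use Theorem~\ref{thm-random-root-find} on $E_1$ and $E_2$ and Lemma~\ref{lemma-B} to filter the bad residue $\theta$ into $Z'$, recover $y_1$ as $-(E_1(y_0)/\varphi^r)/(E_2(y_0)/\varphi^r)$ via Lemma~\ref{lemma-2}, and tally $|S_0|\le 2$, at most three pairs for the cubic $E_1$ and one for the linear $E_2$ to get the $O(a)$ representatives, the count, and the poly($\deg f,\log p$) running time. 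The only cosmetic differences (treating the valuation-$0$ case inside the general $r$-loop rather than as the algorithm's Step 5, and compressing the \textsc{Root-Find}$(E_2',\varphi^r)$ call in the prose, which you do account for in the complexity tally) do not affect correctness.
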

\begin{proof}
The algorithm intends to output roots $y$ of equation $E(y)\equiv f(x) (\varphi^{3a}+\varphi^{2a}(py)+\varphi^{a}(py)^2+(py)^3)\equiv 0 \bmod \langle p^4,\varphi^{4a}\rangle$, where $y=y_0+p y_1+p^2 y_2$ with $y_0,y_1 \in R_0$ and $y_2\in R$. 
From Lemma~\ref{lemma-1}, $y_2$ can be kept as $*$, and is independent of $y_0$ and $y_1$. 

Using Lemma~\ref{lemma-A}, Algorithm \ref{algo1} partially fixes $y_0$ from the set $S_0$ and reduces the problem to finding roots of an $E'(y_0,y_1) \bmod \gen{p,\varphi^{4a}}$.
In other words, if we can find all roots $(y_0,y_1)$ of $E'(y_0,y_1) \bmod \gen{p,\varphi^{4a}}$, then we can find (and count) all roots of $E(y) \bmod \gen{p^4,\varphi^{4a}}$. 
This is accomplished by Step $1$. From Lemma~\ref{lemma-A}, $|S_0|\leq 2$, so loop at Step $3$ runs only for a constant number of times.

Using Lemma~\ref{lemma-A}, $E'(y_0,y_1)\equiv E_1(y_0)+E_2(y_0)y_1 \bmod \gen{p,\varphi^{4a}}$ for a cubic polynomial $E_1(y_0)\in R_0[y_0]$ and a linear polynomial $E_2(y_0)\in R_0[y_0]$. 

We find all solutions of $E'(y_0,y_1)$ by going over all possible valuations of $E_2(y_0)$ with respect to $\varphi$. The case of valuation $0$ is handled in Step 5 and valuation $4a$ is handled in Step 12. 
For the remaining valuations $r \in [4a-1]$, Lemma~\ref{lemma-2} shows that it is enough to find $(z_0,z_1) \in R_0\times R_0$ such that $\varphi^r | E_1(z_0)$ and $\varphi^r || E_2(z_0)$.

%

	
Notice that the number of valuations is bounded by $4a = O(\deg f)$. At Step $6$, the algorithm guesses the valuation $r$ of $E_2(y_0)\in R_0[y_0]$ and subsequent conputation finds all representative roots $b+\varphi^i *$ efficiently (using Theorem~\ref{thm-random-root-find}), such that,
$$ E_1(b+\varphi^i y_0)\,\equiv\, E_2(b+\varphi^i y_0) \,\equiv\, 0\, \bmod \gen{p,\varphi^{r}} \,.$$
The representative root $b+\varphi^i *$ is denoted by $a_1 + \varphi^{i_1}(a_2+\varphi^{i_2} *)$ in Steps 13 \& 16 of Algorithm \ref{algo1}.

Finally, we need to filter out those $y_0$'s for which $E_2(b+\varphi^i y_0)\equiv 0 \bmod \gen{p,\varphi^{r+1}}$. 
This can be done efficiently using Lemma~\ref{lemma-B}, where we get a unique $\theta \in R_0/\gen{\varphi}$ for which, 
\[ E_2(b+\varphi^i (\theta+\varphi y_0))\equiv 0 \bmod \gen{p,\varphi^{r+1}}. \]
	
We store partial roots in two sets $Z_r$ and $Z_r'$, where $Z_r'$ contains the bad values filtered out by Lemma~\ref{lemma-B} as ${b+\varphi^{i}(\theta+\varphi *)}$ and 
$Z_r$ contains all possible roots $b+\varphi^i *$. 
So, the set $Z_r - Z_r'$ contains exactly those elements $z_0$ for which there exists  $z_1\in R_0$, such that, the pair $(z_0,z_1)$ is a root of $E'(y_0,y_1) \bmod \gen{p,\varphi^{4a}}$.

Note that size of each set $S_1$ obtained at Step $9$ is bounded by three using Theorem~\ref{thm-random-root-find} ($E_1$ is at most a cubic in $y_0$). 
Again using Theorem~\ref{thm-random-root-find}, we get at most one pair $(a_2,i_2)$ at Step $11$ for some $a_2\in R_0$ and $i_2\in \N$ ($E_2'$ is linear in $y_0$).

Now, for a fixed $z_0 \in Z_r-Z_r'$ we can calculate all $z_1$'s by the equation 
\[ z_1\equiv \tilde{z}_1 := -(C(y_0)/L(y_0)) \bmod \gen{p, \varphi^{4a-r}} .\]
Here $C(y_0):= E_1(z_0)/\varphi^r \bmod \gen{p, \varphi^{4a-r}}$ and $L(y_0):= E_2(z_0)/\varphi^r \bmod \gen{p, \varphi^{4a-r}}$. 
So, $z_1\in R_0$ comes from the set $z_1\in \tilde{z}_1+\varphi^{4a-r} *$. This can be done efficiently in poly($\deg f, \log p$) time.

Finally, sets $Z= a_0+\varphi^{i_0}Z_r$ and $Z'=a_0+\varphi^{i_0}Z_r'$, for $(a_0,i_0)\in S_0$ and corresponding valid $r\in \{0,\ldots,4a-1\}$, returned by Algorithm \ref{algo1}, describe the $y_0$ for the roots of $E(y_0+p y_1)\bmod \gen{p^4, \varphi^{4a}}$. 
The number of representatives in each of these sets is $O(a)$, since $|S_0|\leq 2$ and sizes of $Z_r$ and $Z_r'$ are only constant.

Since we can efficiently describe these $y_0$'s and corresponding $y_1$'s, and we know their precision, we can count all roots $y=y_0+p y_1+p^2 * \subseteq R$ of 
$E(y) \bmod \gen{p^4,\varphi^{4a}}$.
\end{proof}


\begin{remark}[Root finding for $k=3$ and $k=2$]
Algorithm \ref{algo1} can as well be used when $k\in \{2,3\}$ (the algorithm simplifies considerably).
 
For $k=3$, by Lemma~\ref{lemma-1}, the only relevant coordinate is $y_0$. Moreover, we can directly call algorithm \Call{Root-Find}{} to find all roots of $E(y)/p^2$. 

For $k=2$, using Lemma~\ref{lemma-1} again, we see that there are only two possibilities: $y_0=*$, or there is no solution. This can be determined by testing whether $E(y)/p^2 \bmod \langle \varphi^{2a}\rangle$ exists.
\end{remark}

\subsection{Wrapping up Theorems~\ref{thm1} \& \ref{thm2} }   


\begin{proof}[Proof of Theorem~\ref{thm1}]
We prove that given a general univariate $f(x)\in \Z[x]$ and a prime $p$, a non-trivial factor of $f(x)$ modulo $p^4$ can be obtained in randomized poly$(\deg f,\log p)$ time (or the irreducibility of $f(x) \bmod p^4$ gets certified).

If $f(x) \equiv f_1(x) f_2(x) \bmod p$, where $f_1(x), f_2(x) \in \F_p[x]$ are two coprime polynomials, then we can efficiently lift this factorization 
to the ring $(\Z/\gen{p^4})[x]$, using Hensel lemma (Lemma~\ref{lemma-hensel}), to get non-trivial factors of $f(x) \bmod p^4$. 

For the remaining case, $f(x)\equiv \varphi^e \bmod p$ for an irreducible polynomial $\varphi(x)$ modulo $p$. The question of factoring $f \bmod p^4$ then reduces to root finding of a polynomial $E(y) \bmod \gen{p^4,\varphi^{4a}}$ by Reduction theorem (Theorem~\ref{thm-reduction}). Using Theorem~\ref{thm-algo1}, we get all such roots and hence a non-trivial factor of $f(x) \bmod p^4$ is found. If there are no roots $y \in R$ of $E(y)$, for all $a\leq e/2$, then the polynomial $f$ is irreducible (by symmetry, if there is a factor for $a > e/2$ then there is a factor for $a \leq e/2$).
\end{proof}

\begin{remark}
As discussed before, the above proof applies to factorization modulo $p^3$ and $p^2$ as well (by considering the generality of Theorems~\ref{thm-reduction} \& \ref{thm-algo1}). Hence, Theorem~\ref{thm1} also solves the open question of factoring $f$ modulo $p^3$. In fact, in Appendix \ref{appendix-mod-p^3} we observe that our efficient algorithm outputs {\em all} the factors of $f\bmod p^3$ in a compact way.
\end{remark}

\begin{proof}[Proof of Theorem \ref{thm2}]
We will prove the theorem for $k=4$, case of $k<4$ is similar.

We are given a univariate $f(x) \in \mathbb{Z}[x]$ of degree $d$ and a prime $p$, such that, $f(x) \bmod p$ is a power of an irreducible polynomial $\varphi(x)$. 
So, $f(x)$ is of the form $\varphi(x)^e + p h(x) \bmod p^4$, for an integer $e\in \N$ and a polynomial $h(x) \in (\Z/\gen{p^4})[x]$ of degree $\leq d$ (also, $\deg\varphi^e\le d$). 
By unique factorization over the ring $\F_p[x]$, if $\tilde{g}(x)$ is a factor of $f(x)\bmod p$ then, $\tilde{g}(x)\equiv \tilde{v}\varphi(x)^a \bmod p$ for a unit $\tilde{v} \in \F_p$. 

First, we show that it is enough to find all the lifts of $\tilde{g}(x)$, such that, $\tilde{g} (x) \equiv \varphi(x)^a \bmod p$ for an $a\leq e$. 
If $\tilde{g}(x) \equiv \tilde{v} \varphi(x)^a \bmod p$, then any lift has the form $g(x) \equiv v(x)(\varphi^a-p y) \bmod p^4$ for a unit 
$v(x)\in (\tilde{v}+p *) \subseteq (\Z/\gen{p^4})[x]$. 
Any such $g(x)$ maps uniquely to a $g_1(x) := \tilde{v}^{-1} g(x) \bmod p^4$, which is a lift of $\varphi(x)^a \bmod p$. So, it is enough to find all the lifts of $\varphi(x)^a \bmod p$.


We know that any lift $g(x) \in (\Z/\gen{p^4})[x]$ of $\tilde{g}(x)$, which is a factor of $f(x)$, must be of the form $\varphi(x)^a - py(x) \bmod p^4$ for a polynomial $y(x) \in (\Z/\gen{p^4})[x]$.
By Reduction theorem (Theorem~\ref{thm-reduction}), we know that finding such a factor is equivalent to solving for $y$ in the equation $E(y)\equiv 0 \bmod \gen{p^4,\varphi^{4a}}$. 
By Theorem \ref{thm-algo1}, we can find all such roots $y$ in randomized poly($d, \log p$) time, for $a \leq e/2$. 

If $a > e/2$ then we replace $a$ by $b := e-a$, as $b \leq e/2$, and solve the equation $E(y)\equiv 0 \bmod \gen{p^4,\varphi^{4b}}$ using Theorem \ref{thm-algo1}. 
This time the factor corresponding to $y$ will be,
$g(x) \ \equiv\ f/(\varphi^{b}-py) \ \equiv\ E(y)/\varphi^{4b} \ \bmod p^4$,
from Reduction theorem (Theorem~\ref{thm-reduction}).

%

The number of lifts of $\tilde{g}(x)$ which divide $f\bmod p^4$ is the count of $y$'s that appear above. This is efficiently computable.
\end{proof}



\section{Conclusion}
\vspace{-1mm}

The study of \cite{von1998factoring, von1996factorization} sheds some light on the behaviour of the factoring problem for integral polynomials modulo prime powers. 
It shows that for ``large" $k$ the problem is similar to the factorization over $p$-adic fields (already solved efficiently by \cite{cantor2000factoring}). 
But, for ``small" $k$ the problem seems hard to solve in polynomial time. We do not even know a practical algorithm.

This motivated us to study the case of constant $k$, with the hope that this will help us invent new tools. 
In this direction, we make significant progress by giving a unified method to factor $f\bmod p^k$ for $k\leq 4$. 
We also generalize Hensel lifting for $k\leq 4$, by giving all possible lifts of a factor of $f\bmod p$, in the classically hard case of $f\bmod p$ being a power of an irreducible.

We give a general framework (for any $k$) to work on, by reducing the factoring in a big ring to root-finding in a smaller ring. 
We leave it open whether we can factor $f\bmod p^5$, and beyond, within this framework.

We also leave it open, to efficiently get all the solutions of a {\em bivariate} equation, in $\Z/\langle p^k\rangle$ or $\F_p[x]/\langle \varphi^k\rangle$, in a  compact representation. 
Surprisingly, we know how to achieve this for univariate polynomials~\cite{berthomieu2013polynomial}. 
This, combined with our work, will probably give factoring mod $p^k$, for any $k$.

\noindent
{\bf Acknowledgements. } 
We thank Vishwas Bhargava for introducing us to the open problem of factoring $f \bmod p^3$. N.S.~thanks the funding support from DST (DST/SJF/MSA-01/2013-14). 
R.M.~would like to thank support from DST through grant DST/INSPIRE/04/2014/001799. 
\medskip

\bibliographystyle{alpha}
\bibliography{bibliography}

\appendix

\section{Preliminaries}
\label{appendix-pre}

The following theorem by Cantor-Zassenhaus \cite{cantor1981new} efficiently finds all the roots of a given univariate polynomial over a finite field. 

\begin{theorem}[Cantor-Zassenhaus]
\label{thm-CZ}
Given a univariate degree $d$ polynomial $f(x)$ over a given finite field $\mathbb{F}_q$, we can find all the irreducible factors of $f(x)$ in $\F_q[x]$ in randomized poly($d,\log{q}$) time.
\end{theorem}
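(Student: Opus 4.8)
The plan is to follow the classical three-phase pipeline of Cantor and Zassenhaus: reduce to the squarefree case, separate factors by degree, and then split same-degree pieces by a randomized trick, recursing on the outputs.

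First I would strip repeated factors. Computing $\gcd(f,f')$ and dividing it out isolates a squarefree polynomial; the only wrinkle over $\F_q$ of characteristic $p$ (so $q=p^m$) is that $f'=0$ can occur, which forces $f(x)=g(x^p)=g_1(x)^p$ where $g_1$ is obtained from $g$ by replacing every coefficient $c$ with its unique $p$-th root $c^{q/p}$ (the inverse Frobenius). Iterating this, in $\poly(d,\log q)$ time one writes $f$ as a product of powers of pairwise coprime squarefree polynomials, so it suffices to list the irreducible factors of a single squarefree $g$.

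Next I would carry out distinct-degree separation, using the standard fact that $x^{q^i}-x$ equals the product of all monic irreducibles over $\F_q$ whose degree divides $i$. So I set $h_0:=g$ and, for $i=1,2,\ldots$, compute $g_i:=\gcd\!\bigl(h_{i-1},\,(x^{q^i}-x)\bmod g\bigr)$ and $h_i:=h_{i-1}/g_i$; then $g_i$ is exactly the product of the degree-$i$ irreducible factors of $g$. The element $x^{q^i}\bmod g$ is built by applying the $q$-power Frobenius map $i$ times to $x$ (each application being $\log q$ squarings modulo $g$), so this phase runs in $\poly(d,\log q)$.

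The remaining --- and genuinely randomized --- step is equal-degree splitting, which I expect to be the crux: given a squarefree $g$ all of whose irreducible factors have a known common degree $i$ (and $\deg g>i$), produce a proper factor. For odd $q$, pick $a(x)$ uniformly of degree $<\deg g$ and form $b(x):=a(x)^{(q^i-1)/2}\bmod g$; by CRT, modulo each irreducible factor $a$ is an element of $\F_{q^i}$ and $b\equiv\pm1$, with the two signs each occurring with probability $\tfrac12-O(q^{-i})$ and essentially independently across the factors, so $\gcd(b-1,g)$ is a nontrivial proper factor with probability bounded below by a constant. For even $q$, say $q=2^m$, I would instead use the additive (Artin--Schreier) analogue: with $a$ random, form the trace-type element $T(a):=\sum_{j=0}^{mi-1}a^{2^j}\bmod g$, whose reduction modulo each irreducible factor is a uniformly random element of $\F_2$, so again $\gcd(T(a),g)$ splits $g$ with constant probability. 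Each trial costs $\poly(d,\log q)$, and $O(\log d)$ rounds of amplification split $g$ completely. Since the factors handed down to the recursion are strictly smaller, the recursion tree has at most $d$ leaves and logarithmic depth, every node does $\poly(d,\log q)$ work, and the algorithm outputs all irreducible factors of $f$ over $\F_q$ in randomized $\poly(d,\log q)$ time.
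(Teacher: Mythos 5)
Your proposal is correct: the paper does not prove this statement at all but simply cites it as the classical Cantor--Zassenhaus result, and your argument is exactly the standard pipeline behind that citation (squarefree reduction with the inverse-Frobenius fix when $f'=0$, distinct-degree separation via $x^{q^i}-x$, and randomized equal-degree splitting by the $(q^i-1)/2$-power trick for odd $q$ and the Artin--Schreier trace for characteristic $2$). The only nitpick is that the recursion tree need not have logarithmic depth, but since it has at most $d$ leaves and each node costs $\poly(d,\log q)$, the stated running time still follows.
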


Currently, it is a big open question to derandomize this algorithm.

Below we state a lemma, originally due to Hensel \cite{Hensel1918}, for $\mathcal{I}$-adic lifting of {\em coprime} factorization for a given univariate polynomial. Over the years, it has acquired many forms in different texts; the version being presented here is due to Zassenhaus \cite{zassenhaus1969hensel}.

\begin{lemma}[Hensel lemma \& lift \cite{Hensel1918}]
\label{lemma-hensel}
Let $R$ be a commutative ring with unity, and let $\mathcal{I}\subseteq R$ be an ideal. Given a polynomial $f(x) \in R[x]$, let 
$g(x),h(x),u(x),v(x) \in R[x]$ be polynomials, such that, $f(x) = g(x) h(x) \bmod \mathcal{I}$ and $g(x)u(x)+h(x)v(x)=1 \bmod \mathcal{I}$.

Then, for any $l\in \mathbb{N}$, we can efficiently compute $g^*,h^*,u^*,v^* \in R[x]$ such that
\[f \,=\, g^* h^* \,\bmod \mathcal{I}^{l} \qquad\text{ (called {\em lift of the factorization}) }\]
where $g^*=g \bmod \mathcal{I}$, $h^*=h \bmod \mathcal{I}$ and $g^*u^*+h^*v^*=1 \bmod \mathcal{I}^{l}$. 

Moreover, $g^*$ and $h^*$ are unique upto multiplication by a unit.
\end{lemma}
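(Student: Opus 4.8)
\medskip\noindent\textbf{Proof proposal.}
The plan is an iterative lifting in which the $\mathcal I$-adic precision doubles each round, so $\lceil\log_2 l\rceil$ rounds suffice (a precision-$+1$ variant with $O(l)$ rounds works equally well). I would keep the invariant that after $t$ rounds there are $g_t,h_t,u_t,v_t\in R[x]$ with $f\equiv g_th_t\bmod\mathcal I^{m_t}$, $g_tu_t+h_tv_t\equiv1\bmod\mathcal I^{m_t}$, $g_t\equiv g$ and $h_t\equiv h\bmod\mathcal I$, where $m_0=1$, $m_{t+1}=2m_t$, and $(g_0,h_0,u_0,v_0)=(g,h,u,v)$, so the base case is exactly the hypothesis. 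Once $m_t\ge l$ I output $g^*:=g_t$, $h^*:=h_t$, $u^*:=u_t$, $v^*:=v_t$. Each round being a constant number of ring multiplications, additions and divisions-with-remainder, the whole procedure runs in time polynomial in the input size.

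For one round, write $m:=m_t$ and $e:=f-g_th_t\in\mathcal I^mR[x]$. Seeking $g_{t+1}=g_t+\Delta_g$, $h_{t+1}=h_t+\Delta_h$ with $\Delta_g,\Delta_h\in\mathcal I^mR[x]$, the cross term $\Delta_g\Delta_h$ lands in $\mathcal I^{2m}$, so it is enough to solve $h_t\Delta_g+g_t\Delta_h\equiv e\bmod\mathcal I^{2m}$; multiplying the relation $g_tu_t+h_tv_t=1-c$ ($c\in\mathcal I^m$) by $e$ gives $g_t(u_te)+h_t(v_te)\equiv e\bmod\mathcal I^{2m}$ since $ce\in\mathcal I^{2m}$, so $\Delta_g:=v_te$ and $\Delta_h:=u_te$ do the job. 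The Bézout certificate is refreshed the same way: if $g_{t+1}u_t+h_{t+1}v_t=1-c'$ with $c'\in\mathcal I^m$, take $u_{t+1}:=u_t(1+c')$ and $v_{t+1}:=v_t(1+c')$, so that $g_{t+1}u_{t+1}+h_{t+1}v_{t+1}=(1+c')(1-c')\equiv1\bmod\mathcal I^{2m}$. To keep degrees from growing one takes $g$ (hence every $g_t$) monic and reduces the corrections modulo $g_t$ using division with remainder; in particular the output $h^*$ is then just the exact quotient $f/g^*$ in $(R/\mathcal I^l)[x]$, of degree $\deg f-\deg g$. All the required congruences modulo $\mathcal I$ are visibly preserved, so the invariant is restored.

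For ``unique up to a unit'' I would pass to $A:=(R/\mathcal I^l)[x]$, in which (the image of) $\mathcal I$ generates a nilpotent ideal. If $f\equiv\bar g\bar h\bmod\mathcal I^l$ is a second factorization with $\bar g\equiv g$, $\bar h\equiv h\bmod\mathcal I$ and $\bar g$ monic, then $g^*u^*+\bar hv^*$ differs from $g^*u^*+h^*v^*\equiv1$ by an element of a nilpotent ideal, hence is a unit, so $\langle g^*,\bar h\rangle=A$; writing $1=g^*a+\bar hb$ yields $\bar g=g^*(\bar ga+h^*b)$, i.e.\ $g^*\mid\bar g$, and symmetrically $\bar g\mid g^*$, whence $g^*=\bar g\,w$ with $w$ a unit of $A$ (here the monic $g^*$ is a non-zero-divisor, so $g^*(1-w'w)=0$ forces $w'w=1$); the same argument gives $h^*=\bar h\cdot(\text{unit})$.

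The main obstacle is less any individual computation than two structural points that must be respected throughout the iteration: the coprimality certificate $gu+hv=1$ has to be lifted \emph{in lockstep} with the factorization, since solvability of the correction equation at precision $\mathcal I^{2m}$ relies on a Bézout relation valid modulo $\mathcal I^m$; and controlling polynomial degrees over an arbitrary commutative ring forces the (implicit) normalization that $g$ be monic --- the very condition that also makes the final uniqueness statement go through.
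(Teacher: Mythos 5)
The paper does not prove this lemma at all --- it is quoted as a classical result, citing Hensel (1918) and Zassenhaus (1969) --- and your argument is exactly the standard quadratic (Zassenhaus-style) lifting proof that those citations refer to: solve the linearized correction equation via the B\'ezout certificate, refresh the certificate, double the precision, and get uniqueness up to a unit from comaximality in $(R/\mathcal{I}^l)[x]$ where $\mathcal{I}$ is nilpotent. The proof is correct (your explicit monic normalization for degree control and for the non-zero-divisor step in the uniqueness argument is the usual harmless convention), so there is nothing to fix.
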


\section{Root finding modulo $\varphi(x)^i$}
\label{appendix-rootfind}

Let us denote the ring $\F_p[x]/\langle \varphi^i\rangle$ by $R_0$ (for an irreducible $\varphi(x)\bmod p$). In this section, we give an algorithm to find all the roots $y$ of a polynomial $g(y) \in R_0[y]$ in the ring $R_0$. 
The algorithm was originally discovered by \cite[Cor.24]{berthomieu2013polynomial} to find roots in $\Z/\langle p^i\rangle$, we adapt it here to find roots in $R_0$.

Note that $R_0/\langle \varphi^j\rangle = \F_p[x]/\langle \varphi^j\rangle$, for $j\leq i$, and $R_0/\langle \varphi\rangle =: \F_q$ is the finite field of size $q:=p^{\deg(\varphi\bmod p)}$. 
The structure of a root $y$ of $g(y)$ in $R_0$ is 
\[y \,=\, y_0+\varphi y_1+\varphi^2 y_2+\ldots+ \varphi^{i-1} y_{i-1},\]
where $y\in R_0$ and each $y_j\in \F_q$ for all $j\in \{0,\ldots,i-1\}$. Also, recall the notation of $*$ (given in Section \ref{sec-pre}) and representative roots (in Section \ref{fact-root}).

The {\bf output} of this algorithm is simply a set of at most ($\deg g$) many representative roots of $g(y)$. This bound of $\deg g$ is a curious by-product of the algorithm.

\begin{breakablealgorithm}
 \caption{Root-finding in  ring $R_0$}\label{algo2}
\begin{algorithmic}[1]
\Procedure{Root-find}{$g(y)$, $\varphi^i$}
\parState{\textbf{If} $g(y)\equiv 0$ in $R_0/\langle \varphi^i\rangle$ \Return{ $*$} (every element is a root).}
\parState{Let $g(y)\equiv \varphi^\alpha \tilde{g}(y)$ in $R_0/\langle \varphi^i\rangle$, for the unique integer $0\leq\alpha< i$ and the polynomial $\tilde{g}(y)\in R_0/\langle\varphi^{i-\alpha}\rangle[y]$, s.t.,~$\tilde{g}(y)\not \equiv 0$ in $R_0/\langle \varphi\rangle$ and $\deg(\tilde{g})\leq \deg(g)$.}

\parState{Using Cantor-Zassenhaus algorithm find all the roots of $\tilde{g}(y)$ in $R_0/\langle \varphi \rangle$.}
\parState{\textbf{If} $\tilde{g}(y)$ has no root in $R_0/\langle \varphi \rangle$ then \Return{ $\{ \}$}. (Dead-end)}
\parState{Initialize $S = \{\}$. }
\For{each root $a$ of $\tilde{g}(y)$ in $R_0/\langle \varphi\rangle$}
\parState{Define $g_a(y):=\tilde{g}(a+\varphi y)$.}
\parState{$S' \leftarrow$\Call{Root-find}{$g_a(y)$, $\varphi^{i-\alpha}$}.}
\parState{$S\leftarrow S\cup (a+\varphi S')$.}
\EndFor
\parState{\Return{$S$}.}
\EndProcedure
\end{algorithmic}
\end{breakablealgorithm}

Note that in Step 9 we ensure: $\varphi | g_a(y)$. So, in every other recursive call to \Call{Root-find}{} the second argument reduces by at least one. 
The key reason why $|S|\le \deg g$ holds: The number of representative roots of $g_a(y)$ are upper bounded by the multiplicity of the root $a$ of $\tilde{g}(y)$.

\section{Finding all the factors modulo $p^3$}
\label{appendix-mod-p^3}

We will give a method to efficiently get and count all the distinct factors of $f \bmod p^3$, where $f(x)\in \Z[x]$ is a univariate polynomial of degree $d$.

\begin{theorem}\label{thm-p^3}
Given $f(x)\in \Z[x]$, a univariate polynomial of degree $d$ and a prime $p\in \N$, we give (\& count) all the distinct factors of $f \bmod p^3$ of degree at most $d$ in randomized poly($d,\log p$) time.
\end{theorem}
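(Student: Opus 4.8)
The plan is to bootstrap from Section~\ref{sec-main}: reduce to the primary case $f\equiv\varphi^e\bmod p$, use Theorem~\ref{thm2} (whose engine is the $k=3$ instance of Theorem~\ref{thm-algo1}) to compactly describe and count the lifts of each factor of $f\bmod p$, and then glue the counts multiplicatively over the coprime pieces of $f\bmod p$.

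\textbf{Step 1 (reduce to a single primary piece).} First I would dispose of the case $p\mid f$: writing $f\equiv p^l\tilde f\bmod p^3$ with $\tilde f\not\equiv 0\bmod p$, a routine case analysis on $l$ reduces enumerating degree-$\le d$ divisors of $f\bmod p^3$ to enumerating divisors of $\tilde f$ modulo $p,p^2,p^3$ (and, when $l\ge 3$, to the trivial count: every degree-$\le d$ polynomial divides $0$). So assume $f\not\equiv 0\bmod p$. Factor $f\equiv\prod_{i=1}^t\varphi_i^{e_i}\bmod p$ by Cantor--Zassenhaus (Theorem~\ref{thm-CZ}) and Hensel-lift (Lemma~\ref{lemma-hensel}) to a coprime factorization $f\equiv\prod_i f_i\bmod p^3$ with $f_i\equiv\varphi_i^{e_i}\bmod p$. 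Reducing any factorization $gh\equiv f\bmod p^3$ modulo $p$ and using unique factorization in $\F_p[x]$ gives $\bar g=\prod_i\bar g_i$ with $\bar g_i\mid\varphi_i^{e_i}$; this being a coprime factorization of $\bar g$, the uniqueness clause of Lemma~\ref{lemma-hensel} lifts it uniquely, forcing $g\equiv u\prod_i g_i\bmod p^3$ with $u$ a unit and each $g_i\mid f_i\bmod p^3$. Hence $(g_1,\dots,g_t)\mapsto\prod_i g_i$ is a bijection up to units between tuples of divisors of the $f_i$ and divisors of $f$, so the number of distinct factors of $f\bmod p^3$ is the product of the numbers for the $f_i$, and it remains to treat one primary piece.

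\textbf{Step 2 (the primary count).} Assume $f\equiv\varphi^e\bmod p$. Any divisor of $f\bmod p^3$ reduces mod $p$ to $c\varphi^a$ for a nonzero constant $c\in\F_p$ and $0\le a\le e$; dividing out $c$, up to units these divisors are precisely the lifts of $\varphi^a$, $a=0,\dots,e$, and distinct $a$ give distinct reductions mod $p$ so no divisor is counted twice. For each $a$, Theorem~\ref{thm2} already compactly describes and counts all lifts of $\varphi^a$ dividing $f\bmod p^3$; unpacking it (Theorems~\ref{thm-reduction} and~\ref{thm-algo1} at $k=3$, with the remark following Theorem~\ref{thm-algo1}), these lifts are $g\equiv\varphi^a-py\bmod p^3$ for $y$ ranging over $O(a)$ representative sets $y=y_0+p\,*$, $y_0\in\F_p[x]/\langle\varphi^{3a}\rangle$, each carrying an explicit $\varphi$-precision. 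Intersecting with the constraint $\deg(\varphi^a-py)\le d$ truncates each such set, and each then contributes a known power of $p^{\deg\varphi}$. Summing over $a\le e$ and multiplying over the coprime pieces of Step~1 gives the total; the representatives are the promised compact output; and everything runs in randomized $\mathrm{poly}(d,\log p)$ time by Theorems~\ref{thm-CZ}, \ref{thm-random-root-find}, \ref{thm-algo1} and Lemma~\ref{lemma-hensel}.

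\textbf{Main obstacle.} The algebra is entirely that of Section~\ref{sec-main}; the real work is the bookkeeping of precisions. I expect the crux to be (i) pinning down what ``distinct factor of degree $\le d$'' should mean --- divisors up to units, i.e.\ the normal forms $\varphi^a-py$ --- and checking this notion survives both the coprime splitting of Step~1 and the unit-normalisation of Step~2, and (ii) verifying that imposing $\deg(\varphi^a-py)\le d$ cuts each $*$-tail in the \textsc{Root-Find} output to a clean finite set, so that the resulting sum of prime powers neither over- nor under-counts. That accounting, not any new idea, is where the care lies.
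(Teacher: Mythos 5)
Your overall route is the paper's own (Appendix on factoring mod $p^3$): split off the coprime pieces by Cantor--Zassenhaus plus Hensel, treat each primary piece $\varphi^e+ph$ via the Reduction theorem and the $k=3$ root-finding, and multiply the counts. The genuine gap is exactly at the point you wave off as ``bookkeeping'': your normalization of ``distinct factor'' --- keep the solutions $y$ of $E(y)\equiv 0\bmod\langle p^3,\varphi^{3a}\rangle$, described as sets $y_0+p\,*$ with $y_0\in\F_p[x]/\langle\varphi^{3a}\rangle$, and truncate by $\deg(\varphi^a-py)\le d$ --- does not implement ``distinct up to units'', and the resulting count is wrong. Distinct admissible $y$ routinely give associate factors: for any $w$, $(1+pw)(\varphi^a-py)=\varphi^a-py'$ with $y'=y-w\varphi^a+pwy$, and both $y$ and $y'$ survive your degree truncation, so each unit-class of factors is tallied many times. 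Concretely, take $f=x^2$, $\varphi=x$, $a=1$, $k=3$: the solvability condition is $x\mid y_0$, so your recipe counts $p^2$ choices of $y_0$ times $p^3$ choices of $y_1$ of degree $\le 2$, i.e.\ $p^5$ ``factors'', whereas up to units there are exactly $p$ lifts of $x$ dividing $x^2\bmod p^3$ (namely $x-p^2c$); e.g.\ $y_0=cx$ gives $x-pcx=(1-pc)x$, which is just $x$ again.

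The paper avoids this because its derivation does not stop at the generic mod-$\varphi^{3a}$ description: it shows the divisibility condition collapses to $g_2+g_1y_0+\varphi^{e-2a}y_0^2\equiv 0\bmod\langle p,\varphi^{a}\rangle$ (where $h=\varphi^a g_1+pg_2$), so the relevant roots live in $\F_p[x]/\langle\varphi^{a}\rangle$, i.e.\ have degree $<a\deg\varphi$. Then $\varphi^a-p(y_0+py_1)$, with such a root $y_0$ and $y_1$ free of degree $<a\deg\varphi$, runs exactly over the \emph{monic} lifts of $\varphi^a$; distinct monic polynomials are never associates, the degree bound $d$ is automatic for monic divisors of a monic $f$, and the count per $a$ is (number of roots)$\cdot p^{a\deg\varphi}$, with $a>e/2$ handled by monic cofactors and the coprime pieces glued multiplicatively. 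To repair your argument you must replace the constraint $\deg(\varphi^a-py)\le d$ by the monicity constraint $\deg(py)<a\deg\varphi$ and prove that the root condition depends only on $y_0\bmod\varphi^a$ --- which is, in substance, redoing the paper's reduction modulo $\langle p,\varphi^{a}\rangle$ rather than quoting Theorem~\ref{thm-algo1} as a black box.
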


\noindent
\textbf{Note:}\label{note-p^3}
We will not distinguish two factors if they are same up to multiplication by a unit. We will only find monic (leading coefficient $1$) factors of $f(x)\bmod p^3$ and assume that $f$ is monic. 

\begin{proof}[Proof of Theorem \ref{thm-p^3}]
By Theorem \ref{thm-CZ} and Lemma \ref{lemma-hensel} we write: 
\[f(x) \,\equiv\, \prod_{i=1}^{n} f_i(x) \,\equiv\, \prod_{i=1}^{n}(\varphi_i^{e_i}+p h_i)\, \bmod p^3\]
where $f_i(x)\equiv (\varphi_i^{e_i}+p h_i) \bmod p^3$ with $\varphi_{i} \bmod p^3$ being monic and irreducible mod $p$, $e_i\in \N$, and $h_i(x)\bmod p^3$ of degree  $< e_i \deg(\varphi_{i})$, for all $i\in [n]$.

Thus, wlog, consider the case of $f\equiv \varphi^e+ p h$.

By Reduction theorem (Theorem \ref{thm-reduction}) finding factors of the form $\varphi^a- p y \bmod p^3$ of $f\equiv \varphi^e+ p h \bmod p^3$, for $a\leq e/2$, is equivalent to finding all the roots of the equation
\[ E(y) \,\equiv\, f(x) (\varphi^{2a}+\varphi^{a} (p y)+ (p y)^2)\,\equiv\, 0\,  \bmod \langle p^3,\varphi^{3a}\rangle . \]

Consider $R:= \Z[x]/\langle p^3,\varphi^{3a}\rangle$ and $R_0:= \Z[x]/\langle p,\varphi^{3a}\rangle$, analogous to those in Section \ref{sec-pre}.

Using Lemma \ref{lemma-1}, we know that all solutions of the equation $E(y) \equiv 0 \bmod  \langle p^3,\varphi^{3a}\rangle$ will be of the form $y=y_0+p* \in R$, for a $y_0\in R_0$.
On simplifying this equation we get $E(y) \,\equiv\, p h \varphi^{2a}+(p^2 h \varphi^a) y_0 + (p^2 \varphi^e) y_0^2 \,\equiv\, 0\, \bmod \langle p^3,\varphi^{3a}\rangle$. 

Reducing this equation mod $\langle p^2,\varphi^{3a}\rangle$, we get that $h \equiv 0 \bmod \gen{p, \varphi^a}$ is a necessary condition for a root $y_0$ to exist. So, we get  
$$E(y) \,\equiv\, p^2 g_2 \varphi^{2a}+(p^2 g_1 \varphi^{2a}) y_0 + (p^2 \varphi^e) y_0^2 \,\equiv\, 0\, \bmod \langle p^3,\varphi^{3a}\rangle, $$
where $h:= \varphi^a g_1+p g_2$ for unique $g_1, g_2\in \F_p[x]$.

This equation is already divisible by $p^2$ as well as $\varphi^{2a}$ and so using  Claim \ref{claim-2} we get that, finding factors of the form $\varphi^a- p y \bmod p^3$ of $f\equiv \varphi^e+ p h \bmod p^3$, for $a\leq e/2$, is equivalent to finding all the roots of the equation
\[ g_2 +g_1 y_0 +  \varphi^{e-2a} y_0^2 \,\equiv\, 0\, \bmod \langle p,\varphi^{a}\rangle \,.\]

We find all the roots of this equation using one call to \Call{Root-find}{} in randomized poly($d, \log p$) time. Note that any output root $u_0$ lives in $\F_p[x]/\langle \varphi^a\rangle$ and so its degree in $x$ is $< a\deg(\varphi)$. This yields {\em monic} factors of $f\bmod p^3$ (with $0\le a\le e/2$).

For $e\ge a> e/2$, we can replace $a$ by $b:=e-a$ in the above steps. Once we get a factor $\varphi^b- p y \bmod p^3$, we output the cofactor $f/(\varphi^b- p y)$ (which remains monic).

Counting these factors can be easily done in poly-time.

\smallskip

In the general case, if $N_i$ is the number of factors of $f_i\bmod p^3$ then, $\prod_{i=1}^{n}N_i$ is the count on the number of distinct monic factors of $f\bmod p^3$.
\end{proof}

\section{Barriers to extension modulo $p^5$}
\label{sec-comparison}

The reader may wonder about polynomial factoring when $k$ is greater than $4$.
In this section we will discuss the issues in applying our techniques to factor $f(x) \bmod p^5$.

Given $f(x)\equiv \varphi^e \bmod p$, finding one of its factor $\varphi^a-p y \bmod p^5$, for $a\leq e/2$ and $y\in (\Z/\langle p^5\rangle) [x]$, is reduced to solving the equation
\begin{equation}\label{eq3}
E(y) \ :=\ f(x)( \varphi^{4a}+ \varphi^{3a}(py)+ \varphi^{2a}(py)^2+ \varphi^a (py)^3 +(py)^4) \ \equiv\  0\ \bmod \langle p^5,\varphi^{5a} \rangle
\end{equation}

By Lemma \ref{lemma-1}, the roots of $E(y) \bmod \langle p^5,\varphi^{5a}\rangle$ are of the form $y= y_0+p y_1+ p^2 y_2+ p^3 *$ in $R$, where $y_0,y_1,y_2\in R_0$ need to be found.

{\bf First issue.}
The first hurdle comes when we try to reduce root-finding modulo the bi-generated ideal $\langle p^5,\varphi^{5a}\rangle \subseteq \Z[x]$ to root-finding modulo the principal ideal $\langle \varphi^{5a}\rangle \subseteq \F_p[x]$. In the case $k=4$, Lemma \ref{lemma-A} guarantees that we need to solve at most two related equations of the form $E'(y_0,y_1)\equiv 0 \bmod \langle p, \varphi^{4a}\rangle$ to find exactly the roots of $E(y) \bmod \langle p^4,\varphi^{4a}\rangle$. Below, for $k=5$, we show that we have exponentially many candidates for $E'(y_0,y_1,y_2) \in R_0[y_0,y_1,y_2]$ and it is not clear if there is any compact efficient representation for them.

Putting $y= y_0+p y_1+ p^2 y_2$ in Eqn.~\ref{eq3} we get,
\begin{equation}\label{eq4}
E(y) \ =:\ E_1(y_0) + E_2(y_0)y_1 + E_3(y_0)y_2 + (f\varphi^{2a}p^4) y_1^2 \ \equiv\ 0\ \bmod \langle p^5,\varphi^{5a}\rangle,
\end{equation}
where $E_1(y_0):= f\varphi^{4a}+ f\varphi^{3a} p y_0+ f\varphi^{2a} p^2 y_0^2+ f\varphi^a p^3 y_0^3 + f p^4 y_0^4$ is a quartic in $R[y_0]$, $E_2(y_0):= f\varphi^{3a} p^2+ f \varphi^{2a} 2 p^3 y_0+ f \varphi^a 3 p^4 y_0^2$ is a quadratic in $R[y_0]$ and $E_3(y_0):= f\varphi^{3a} p^3 + f\varphi^{2a} 2 p^4 y_0$ is linear  in $R[y_0]$.

To divide Eqn.~\ref{eq4} by $p^3$, we go mod $\langle p^3, \varphi^{5a}\rangle$ obtaining 
\[E(y) \,\equiv\, E_1(y_0)\equiv f\varphi^{4a}+ f\varphi^{3a} p y_0+ f\varphi^{2a} p^2 y_0^2 \,\equiv\, 0\, \bmod \langle p^3,\varphi^{5a}\rangle,\]
a univariate quadratic equation which requires the whole machinery used in the case $k=3$. 
We get this simplified equation since $E_3(y_0)\equiv 0\bmod \langle p^3,\varphi^{5a}\rangle$ and $E_2(y_0)\equiv f\varphi^{3a} p^2\equiv \varphi^{e-2a} \varphi^{2a+3a} p^2 \equiv 0\bmod \langle p^3,\varphi^{5a}\rangle$.

But, to really reduce Eqn.~\ref{eq4} to a system modulo the principal ideal $\langle \varphi^{5a}\rangle \subseteq \F_p[x]$, we need to divide it by $p^4$. 
So, we go mod $\langle p^4,\varphi^{5a}\rangle$:
\[E(y) \,\equiv\, E_1^{'}(y_0)+ E_2^{'}(y_0)y_1 \,\equiv\, 0\, \bmod \langle p^4,\varphi^{5a}\rangle\]
where $E_1^{'}(y_0)\equiv E_1(y_0)\bmod \langle p^4,\varphi^{5a}\rangle$ is a cubic in $R[y_0]$ and $E_2^{'}(y_0)\equiv E_2(y_0)\bmod \langle p^4,\varphi^{5a}\rangle$ is linear in $R[y_0]$. 
This requires us to solve a special bivariate equation which requires the machinery used in the case $k=4$.  

Now, the problem reduces to computing a solution pair $(y_0,y_1)\in (R_0)^2$ of this bivariate. 
We can apply the idea used in Algorithm \ref{algo1} to get all valid $y_0$ efficiently, but since $y_1$ is a function of $y_0$, we need to compute exponentially many $y_1$'s. 
So, there seem to be exponentially many candidates for $E'(y_0,y_1,y_2)$, that behaves like $E(y)/p^4$ and lives in $(\F_p[x]/\langle \varphi^{5a} \rangle)[y_0,y_1,y_2]$. 
At this point, we are forced to compute all these $E'$s, as we do not know which one will lead us to a solution of Eqn.~\ref{eq4}.

\smallskip {\bf Second issue.}
Even if we resolve the first issue and get a valid $E'$, we are left with a trivariate equation to be solved mod $\langle p,\varphi^{5a}\rangle$ 
(Eqn.~\ref{eq4} after shifting $y_0$ and $y_1$ then dividing by $p^4$). 
We could do this when $k$ was $4$, because we could easily write $y_1$ as a function of $y_0$. 
Though, it is unclear how to solve this trivariate now as the equation is {\em nonlinear} in both $y_0$ and $y_1$.

For $k>5$ the difficulty will only increase because of the recursive nature of Eqn.~\ref{eq3} with more and more number of unknowns (with higher degrees). 

\end{document}